\documentclass[11pt,reqno]{amsart}

\usepackage{amssymb,amsmath}
\usepackage[T1]{fontenc}
\usepackage[utf8]{inputenc}
\usepackage[english]{babel}

\usepackage{enumitem}
\usepackage{mathrsfs}
\usepackage{array}
\usepackage{tabularx}
\usepackage{booktabs}

\usepackage[hidelinks]{hyperref}

\newcommand{\doi}[1]{\href{https://doi.org/#1}{DOI:~#1}}

\newtheorem{theorem}{Theorem}[section]
\newtheorem{corollary}[theorem]{Corollary}
\newtheorem{lemma}[theorem]{Lemma}
\newtheorem{proposition}[theorem]{Proposition}

\theoremstyle{definition}
\newtheorem{definition}[theorem]{Definition}
\newtheorem{remark}[theorem]{Remark}
\newtheorem{example}[theorem]{Example}

\numberwithin{equation}{section}
\numberwithin{table}{section}
\numberwithin{figure}{section}

\begin{document}

%-----------------------------------------------------------
% Title
%-----------------------------------------------------------

\title[Context-Free Fixed Points and Complete Classification]
{Context-Free Fixed Points and Complete Classification of Orbits in Picard Iteration for Guarded Power Language Operators}

%-----------------------------------------------------------
% Authors
%-----------------------------------------------------------

\author{Atanas Ilchev}
\address{
Department of Mathematical Analysis,
Faculty of Mathematics and Informatics,
Paisii Hilendarski University of Plovdiv,
4000 Plovdiv, Bulgaria}
\email{atanasilchev@uni-plovdiv.bg}

\author{Hristo Kiskinov}
\address{
Department of Mathematical Analysis,
Faculty of Mathematics and Informatics,
Paisii Hilendarski University of Plovdiv,
4000 Plovdiv, Bulgaria}
\email{kiskinov@uni-plovdiv.bg}

\author{George Pashev}
\address{
Department of Computer Informatics,
Faculty of Mathematics and Informatics,
Paisii Hilendarski University of Plovdiv,
4000 Plovdiv, Bulgaria}
\email{georgepashev@uni-plovdiv.bg}

\author{Boyan Zlatanov}
\address{
Department of Mathematical Analysis,
Faculty of Mathematics and Informatics,
Paisii Hilendarski University of Plovdiv,
4000 Plovdiv, Bulgaria}
\email{bobbyz@uni-plovdiv.bg}

\thanks{Corresponding author: Atanas Ilchev.}

\date{}

%-----------------------------------------------------------
% Abstract
%-----------------------------------------------------------

\begin{abstract}
We study the language-theoretic structure of fixed points and finite
Picard iterates for guarded \(q\)-power language operators. For the
general operator, we prove that a context-free 
seeded language % seed  %%%
determines a unique
context-free fixed point and give an effective construction of a
context-free grammar generating this fixed point, independently of the
initial language.We then consider a marked single-guard special case in which two 
new % fresh
symbols separate the recursive contribution from the 
the contribution of the %NEW
seed. 
In this setting, the initial language can be traced and recovered exactly from
every finite Picard iterate by means of a regular slice and fixed-word
quotients. 
This yields injectivity of the finite-time maps and an abstract finite-time class-preservation principle. 
As a consequence, we
obtain a classification of the finite Picard iterates according to the
exact position of the initial language in the Chomsky hierarchy. Thus
the exact language-theoretic complexity may persist at every finite
stage, while all Picard orbits converge to the same context-free fixed
point.
\end{abstract}

\date{}

\subjclass[2020]{Primary 68Q45; Secondary 47H10, 54E50}

\keywords{formal languages, guarded language operators, context-free languages,
Chomsky hierarchy, fixed points, ultrametric spaces, Picard iteration,
language complexity, exact recovery}

% Submitted and accepted dates will be inserted when available.
% \thanks{Submitted Month Day, 2026. Accepted Month Day, 2026.}

\maketitle

%===========================================================
% Introduction
%===========================================================

\section{Introduction}
\label{sec:introduction}
Formal languages, grammars, and automata admit several natural fixed-point interpretations. In particular, context-free grammars can be represented by finite systems of language equations, with the languages generated by the nonterminals forming the least solution of the corresponding system. Fixed-point principles in formal language theory have been studied for decades, both from the general viewpoint of language equations and through order-theoretic and metric methods~\cite{MollArbibKfoury1988,Kupka1997,Hesselink2010}. More recent investigations have substantially extended the theory of language equations, including equations involving concatenation and Boolean operations, their expressive power, uniqueness of solutions, and associated decision and computability problems~\cite{Okhotin2015,Okhotin2025}.

A complementary viewpoint is obtained by equipping the space of formal languages with a metric determined by the shortest word on which two languages disagree. This topology goes back to the classical work on metric spaces of languages and has subsequently been used in the study of topological properties of language classes~\cite{Vianu1977,FulopKephart2015}. Within this framework, guarded language operators were recently introduced as contractive transformations whose recursive part delays observable differences by a prescribed word length~\cite{HristovIlchevKulinaZlatanov2026}. The framework has since developed in several directions. Positive--negative guarded systems have been studied by combining the same ultrametric geometry with a suitable product order and weak monotonicity, leading to ordered fixed-point results for systems containing both positive and complemented recursive dependencies~\cite{AjetiHristovIlchevZlatanov2026}. Finite-observation conditional guarded operators provide another extension, in which the operator may depend on finitely observable properties of the input language and fixed-point behaviour can be governed by Kannan-type conditions beyond ordinary Banach contractivity~\cite{AlidemaIlchevNedelchevaZlatanov2026}. 
A further fixed-power extension yields guarded operators with a prescribed positive language power together with explicit contraction estimates, uniqueness of the fixed point, and quantitative Picard convergence~\cite{IlchevKiskinovPashevZlatanov2026}.

The present paper develops two complementary aspects of the dynamics of
guarded \(q\)-power language operators. 
We first study the 
properties of the fixed point % limiting object 
for the general operator. 
For a context-free seed, we prove that
the unique fixed point is itself context-free and 
also %NEW
give an explicit
effective construction of a context-free grammar generating it. 
The construction is independent of the initial language and shows directly
that the context-free nature of the fixed point is a structural
consequence of the fixed-point equation, rather than merely a
consequence of invariance of the context-free class.

We then turn from the 
% limiting 
fixed point to the finite Picard iterates. 
A complete classification cannot be expected for the general
operator without additional separation assumptions, since the seed 
language %NEW
may dominate the recursive contribution and destroy all information about
the initial language after finitely many steps. 
We therefore consider a
marked single-guard special case of the general \(q\)-power operator,
in which two 
new % fresh 
symbols 
for each guard %NEW
distinguish the recursive part from the seed. 
For this dynamics, we prove that the initial language can be
isolated by a regular slice and recovered exactly from every finite Picard iterate. 
This leads to injectivity of the finite-time maps, a
general class-preservation principle, and a classification of the
finite iterates according to the exact position of the initial language
in the Chomsky hierarchy. In particular, the exact language-theoretic
class of the initial state may persist at every finite stage even though
all Picard orbits converge to the same context-free fixed point.

%NEW
The paper is organized as follows. 
For the reader's convenience, Section \ref{sec:preliminaries} includes an overview of some 
standard definitions, basic properties and previously obtained results, used later in the paper.
In Section \ref{sec:context-free-fixed-points} we study the language-theoretic structure of the unique
fixed point of the guarded \(q\)-power language operator introduced in \cite{IlchevKiskinovPashevZlatanov2026}.
We prove that if the seeded language is context-free, then independent from the initial language,
the unique fixed point of this operator is context-free.
In Section \ref{sec:classification-picard-orbits} as our second main result 
the relation between the finite Picard orbits is studied and
a full  language classification of the finite Picard iterates and their fixed points
for the marked single guarded  special case of the guarded \(q\)-power language operator is obtained.
In Section \ref{sec:examples} some examples, illustrating the obtained theoretical results are presented. 
%new

%===========================================================
% Preliminaries
%===========================================================

\section{Preliminaries}
\label{sec:preliminaries}

This section collects the standard definitions, previously established results, and closure properties used in the sequel. We recall the basic language-theoretic notation, the length-based ultrametric on the space of formal languages, the previously established contraction theorem for guarded \(q\)-power operators, the Chomsky hierarchy and the relevant closure properties of its language classes, and the classical connection between context-free grammars and language equations. No substantive result in this section is claimed as new.

Throughout the paper, \(\Sigma\) denotes a fixed finite nonempty alphabet, \(\Sigma^\ast\) denotes the set of all finite words over \(\Sigma\), including the empty word \(\varepsilon\), and \(\mathcal{L}=\mathcal{P}(\Sigma^\ast)\) denotes the set of all formal languages over \(\Sigma\). For \(w\in\Sigma^\ast\), its length is denoted by \(|w|\).

\begin{definition}[{\cite{Salomaa1973,Eilenberg1974}}]
\label{def:prelim-formal-language}
A formal language over \(\Sigma\) is any subset \(L\subseteq\Sigma^\ast\). Its characteristic function is denoted by \(\mathbf{1}_L:\Sigma^\ast\to\{0,1\}\), where \(\mathbf{1}_L(w)=1\) if \(w\in L\) and \(\mathbf{1}_L(w)=0\) otherwise.
\end{definition}

For languages \(A,B\subseteq\Sigma^\ast\), their concatenation is \(AB=\{xy:x\in A,\ y\in B\}\). For \(q\ge1\), \(A^q\) denotes the \(q\)-fold concatenation of \(A\) with itself. If \(u,v\in\Sigma^\ast\) and \(L\subseteq\Sigma^\ast\), we write \(uLv=\{uwv:w\in L\}\). For a fixed word \(u\in\Sigma^\ast\), the left quotient of \(L\) by \(u\) is \(u^{-1}L=\{w\in\Sigma^\ast:uw\in L\}\), and for \(v\in\Sigma^\ast\), the right quotient is \(Lv^{-1}=\{w\in\Sigma^\ast:wv\in L\}\). These conventions are standard in formal language theory~\cite{Salomaa1973,Eilenberg1974,HopcroftMotwaniUllman2007}.

\begin{definition}[{\cite{Salomaa1973,Eilenberg1974,HopcroftMotwaniUllman2007}}]
\label{def:prelim-kleene-star}
For a language \(L\subseteq\Sigma^\ast\), its Kleene star is the language \(L^\ast=\bigcup_{n=0}^{\infty}L^n\), where \(L^0=\{\varepsilon\}\). Thus \(L^\ast\) consists of all words obtained by concatenating zero or more words from \(L\), with the concatenation of zero words understood to be \(\varepsilon\). In particular, \(\varepsilon\in L^\ast\) for every language \(L\), and \(\varnothing^\ast=\{\varepsilon\}\).
\end{definition}

% moved below!

%\begin{definition}[{\cite{Schikhof1984}}]
%\label{def:prelim-ultrametric}
%Let \(X\) be a nonempty set. A function \(\rho:X\times X\to[0,\infty)\) is an ultrametric if \(\rho(x,y)=0\) if and only if \(x=y\), \(\rho(x,y)=\rho(y,x)\), and \(\rho(x,z)\le\max\{\rho(x,y),\rho(y,z)\}\) for all \(x,y,z\in X\).
%\end{definition}

The metric used throughout the paper is the classical first-disagreement metric on formal languages considered in the Bodnarchuk--Vianu framework and in later 
works % work 
on the topology of language classes~\cite{Vianu1977,FulopKephart2015}.

\begin{definition}[{\cite{Vianu1977,FulopKephart2015}}]
\label{def:prelim-language-ultrametric}
For \(L,M\in\mathcal{L}\), put \(\nu(L,L)=\infty\), while for \(L\neq M\) define
\(
\nu(L,M)=\min\{|w|:\mathbf{1}_L(w)\neq\mathbf{1}_M(w)\}.
\)
With the convention \(2^{-\infty}=0\), define
\(
d(L,M)=2^{-\nu(L,M)}.
\)
\end{definition}

Thus \(\nu(L,M)\) is the length of 
the % a 
shortest word distinguishing \(L\) from \(M\). 
The distance therefore compares languages according to the shortest finite observation on which they disagree~\cite{Vianu1977,FulopKephart2015}.

\begin{lemma}[{\cite{Vianu1977,FulopKephart2015}}]
\label{lem:prelim-finite-depth}
Let \(L,M\in\mathcal{L}\) and \(N\ge0\). Then \(d(L,M)\le2^{-N}\) if and only if \(L\) and \(M\) agree on every word \(w\in\Sigma^\ast\) with \(|w|<N\).
\end{lemma}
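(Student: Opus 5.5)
The plan is to unfold the definitions of \(\nu\) and \(d\) and observe that, since \(t\mapsto2^{-t}\) is strictly decreasing on \([0,\infty]\) (with \(2^{-\infty}=0\)), the inequality \(d(L,M)\le2^{-N}\) is equivalent to \(\nu(L,M)\ge N\). The statement then reduces to showing that \(\nu(L,M)\ge N\) holds exactly when \(\mathbf{1}_L(w)=\mathbf{1}_M(w)\) for every \(w\) with \(|w|<N\).

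I will treat two cases. If \(L=M\), then \(\nu(L,M)=\infty\) and \(d(L,M)=0\le2^{-N}\). The two languages also trivially agree on all words, so both sides of the equivalence hold.

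If \(L\neq M\), the set \(D=\{w:\mathbf{1}_L(w)\neq\mathbf{1}_M(w)\}\) is nonempty. Its set of lengths is a nonempty subset of \(\mathbb{N}\), so the minimum \(\nu(L,M)\) exists and is a finite natural number.

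For the forward direction, assume \(\nu(L,M)\ge N\). Every \(w\in D\) has \(|w|\ge\nu(L,M)\ge N\), so no word of length \(<N\) lies in \(D\). Hence \(L\) and \(M\) agree below length \(N\).

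For the converse, assume agreement on all words of length \(<N\). Then every \(w\in D\) has \(|w|\ge N\), so the minimum satisfies \(\nu(L,M)\ge N\), and therefore \(d(L,M)=2^{-\nu(L,M)}\le2^{-N}\).

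There is no real obstacle here. The only points needing care are the conventions: the value \(\infty\) in the equal case, and the case \(N=0\), where the agreement condition is vacuous and \(d\le1\) always holds because \(\nu\ge0\).
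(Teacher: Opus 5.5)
Your proof is correct and complete. You reduce \(d(L,M)\le2^{-N}\) to \(\nu(L,M)\ge N\) using strict monotonicity of \(t\mapsto2^{-t}\) with \(2^{-\infty}=0\), then read \(\nu(L,M)\ge N\) as the absence of disagreeing words of length \(<N\); this correctly covers both \(L=M\) and \(L\ne M\), and the edge case \(N=0\). The paper gives no proof of this lemma, citing it as a standard fact from Vianu and from F\"ul\"op--Kephart, and your direct unfolding of the definitions is exactly the routine verification that this citation relies on.
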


\begin{definition}[{\cite{Schikhof1984}}] %moved from above
\label{def:prelim-ultrametric}
Let \(X\) be a nonempty set. A function \(\rho:X\times X\to[0,\infty)\) is an ultrametric if \(\rho(x,y)=0\) if and only if \(x=y\), \(\rho(x,y)=\rho(y,x)\), and \(\rho(x,z)\le\max\{\rho(x,y),\rho(y,z)\}\) for all \(x,y,z\in X\).
\end{definition}

\begin{theorem}[{\cite{Vianu1977,FulopKephart2015}}]
\label{thm:prelim-language-space}
The function \(d\) is an ultrametric on \(\mathcal{L}\), and the ultrametric space \((\mathcal{L},d)\) is complete. The induced topology is the Cantor-type topology obtained by identifying each language with its characteristic function in \(\{0,1\}^{\Sigma^\ast}\).
\end{theorem}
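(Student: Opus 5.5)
We prove the three claims in order: ultrametric axioms, completeness, and the identification of the topology. The standing tool is the quantity \(\nu(L,M)\) from Definition~\ref{def:prelim-language-ultrametric} together with Lemma~\ref{lem:prelim-finite-depth}.

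\emph{Ultrametric axioms.} Since \(\Sigma^\ast\) is countable and every word has finite length, \(\nu(L,M)\) is a nonnegative integer whenever \(L\neq M\). Hence \(d(L,M)\in(0,1]\) for \(L\neq M\), while \(d(L,L)=0\), so \(d(L,M)=0\) exactly when \(L=M\). Symmetry is immediate, because the condition \(\mathbf{1}_L(w)\neq\mathbf{1}_M(w)\) is symmetric in \(L\) and \(M\). For the strong triangle inequality it suffices to show
\[
\nu(L,N)\ge\min\{\nu(L,M),\nu(M,N)\}.
\]
Suppose a word \(w\) distinguishes \(L\) from \(N\). Then \(\mathbf{1}_L(w)\neq\mathbf{1}_M(w)\) or \(\mathbf{1}_M(w)\neq\mathbf{1}_N(w)\), since otherwise \(\mathbf{1}_L(w)=\mathbf{1}_M(w)=\mathbf{1}_N(w)\). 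Therefore \(|w|\ge\min\{\nu(L,M),\nu(M,N)\}\). Because \(t\mapsto 2^{-t}\) is decreasing, this gives \(d(L,N)\le\max\{d(L,M),d(M,N)\}\).

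\emph{Completeness.} Let \((L_n)\) be a Cauchy sequence. For each \(N\) there is \(n_N\) such that \(d(L_n,L_m)\le 2^{-N}\) for all \(n,m\ge n_N\). By Lemma~\ref{lem:prelim-finite-depth}, all \(L_n\) with \(n\ge n_N\) then agree on words of length \(<N\). Define \(L\) by declaring \(w\in L\) if and only if \(w\in L_n\) for all \(n\ge n_{|w|+1}\). This is well defined and consistent by the agreement just noted. Now \(L\) agrees with every \(L_n\), \(n\ge n_N\), on all words of length \(<N\): each such word has length \(|w|+1\le N\), so its membership in \(L\) is decided by the common value along the tail \(n\ge n_{|w|+1}\), and that tail contains \(n\ge n_N\) provided we choose the \(n_N\) nondecreasing. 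Lemma~\ref{lem:prelim-finite-depth} then yields \(d(L_n,L)\le 2^{-N}\), so \(L_n\to L\). The only care needed is this monotone choice of the \(n_N\), which keeps the definition of \(L\) consistent.

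\emph{Topology.} Identify \(L\) with \(\mathbf{1}_L\in\{0,1\}^{\Sigma^\ast}\), where \(\{0,1\}^{\Sigma^\ast}\) carries the product of discrete topologies. By Lemma~\ref{lem:prelim-finite-depth}, the closed ball \(\{M:d(L,M)\le 2^{-N}\}\) is exactly the set of \(M\) agreeing with \(L\) on the finite set \(\{w:|w|<N\}\). This set is a basic open cylinder of the product topology, so every metric ball is product-open. Conversely, any basic cylinder fixes coordinates on some finite set \(F\subseteq\Sigma^\ast\). Taking \(N>\max_{w\in F}|w|\), the ball of radius \(2^{-N}\) about any of its points lies inside the cylinder, so every cylinder is metric-open. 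The two topologies therefore coincide. Since \(\Sigma^\ast\) is countably infinite (as \(\Sigma\neq\varnothing\)), this product is the Cantor space, which justifies the name Cantor-type topology.

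The main obstacle is only the bookkeeping in the completeness step; everything else follows directly from Lemma~\ref{lem:prelim-finite-depth}.
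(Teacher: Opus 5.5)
Your proof is correct. The paper gives no argument for this statement; it is quoted from the cited literature as a preliminary, so there is no in-paper proof to compare with. What you give is the standard direct verification. It uses only Definition~\ref{def:prelim-language-ultrametric} and Lemma~\ref{lem:prelim-finite-depth}, which precede the theorem, so nothing is circular.

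Two places deserve one more sentence of care.

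In the topology step, you show only that closed balls of radius $2^{-N}$ are cylinders, and then conclude that \emph{every} metric ball is product-open. This is justified because $d$ takes only the values $0$ and $2^{-k}$ ($k\ge0$). Hence an open ball $\{M:d(L,M)<r\}$ coincides with the closed ball of radius $2^{-N}$ for the least $N$ with $2^{-N}<r$, or with all of $\mathcal{L}$ if $r>1$. This step is also exactly where finiteness of $\Sigma$ enters: $\{w:|w|<N\}$ is finite, so the ball is a genuine basic cylinder.

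In the completeness step, the set $L$ is well defined no matter what. What the tail agreement gives you is that $\mathbf{1}_L(w)=\mathbf{1}_{L_k}(w)$ for \emph{every} $k\ge n_{|w|+1}$. Monotonicity of $(n_N)$ is used only when you check that $L$ and $L_n$ agree below length $N$ for $n\ge n_N$. Even there it can be dropped by comparing both with $L_k$, where $k=\max\{n,n_{|w|+1}\}$. So your closing remark misplaces where the care is needed, but the argument is sound.

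As an alternative, once the identification with $\{0,1\}^{\Sigma^\ast}$ is established, completeness follows from compactness of that product, since a metric inducing a compact topology is complete. Your direct Cauchy-sequence argument avoids appealing to Tychonoff's theorem.
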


\begin{theorem}[{\cite{FulopKephart2015}}]
\label{thm:prelim-finite-dense}
The family of 
the %NEW
finite languages is dense in \((\mathcal{L},d)\).
\end{theorem}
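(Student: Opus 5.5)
To prove Theorem~\ref{thm:prelim-finite-dense}, I will show that every language \(L\in\mathcal{L}\) is the limit of a sequence of finite languages. The natural candidates are the length truncations
\[
L_N=\{w\in L:\ |w|<N\},\qquad N\ge0 .
\]
Each \(L_N\) is finite, because \(\Sigma\) is finite. Indeed, \(\Sigma^{<N}=\bigcup_{k=0}^{N-1}\Sigma^k\) has exactly \(\sum_{k=0}^{N-1}|\Sigma|^k\) elements, and \(L_N\subseteq\Sigma^{<N}\).

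The key step is to estimate \(d(L_N,L)\) using Lemma~\ref{lem:prelim-finite-depth}. By construction, \(\mathbf{1}_{L_N}(w)=\mathbf{1}_L(w)\) for every word with \(|w|<N\). The lemma then gives \(d(L_N,L)\le 2^{-N}\). Hence \(L_N\to L\) in \((\mathcal{L},d)\) as \(N\to\infty\).

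It follows that every open ball \(B(L,\varepsilon)\) with \(\varepsilon>0\) contains a finite language. It suffices to choose \(N\) with \(2^{-N}<\varepsilon\) and take \(L_N\). Therefore the closure of the family of finite languages is all of \(\mathcal{L}\), which is the claimed density.

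There is no real obstacle here. The only point that needs care is the finiteness of \(L_N\), and this is exactly where the finiteness of the alphabet \(\Sigma\) is used.
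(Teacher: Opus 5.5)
Your proof is correct and complete. The truncations $L_N=\{w\in L:|w|<N\}$ are finite because $\Sigma$ is finite, and they agree with $L$ on every word of length less than $N$, so Lemma~\ref{lem:prelim-finite-depth} gives $d(L_N,L)\le 2^{-N}$. The paper only cites this result and gives no proof of its own; your truncation argument is the standard one, and you correctly identify the finiteness of $\Sigma$ as the point that needs care.
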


We shall use the classical Banach contraction principle in its standard form.

\begin{theorem}[{\cite{Banach1922}}]
\label{thm:prelim-banach}
Let \((X,\rho)\) be a nonempty complete metric space and let \(T:X\to X\) satisfy \(\rho(Tx,Ty)\le c\rho(x,y)\) for all \(x,y\in X\), where \(0\le c<1\). Then \(T\) has a unique fixed point \(x^{(\star)}\). For every initial point \(x_0\in X\), the Picard iteration \(x_{n+1}=Tx_n\) converges to \(x^{(\star)}\), and
\(
\rho(x_n,x^{(\star)})\le c^n\rho(x_0,x^{(\star)})
\)
for every \(n\ge0\).
\end{theorem}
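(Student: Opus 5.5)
The statement is the Banach contraction principle (Theorem~\ref{thm:prelim-banach}), and I would follow the standard Cauchy-sequence route.

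Fix \(x_0\in X\) and set \(x_{n+1}=Tx_n\). By induction on \(n\), using the contraction inequality once per step, we get
\[
\rho(x_{n+1},x_n)\le c^n\rho(x_1,x_0).
\]
For \(m>n\), the triangle inequality and the geometric series then give
\[
\rho(x_m,x_n)\le \sum_{k=n}^{m-1}c^k\rho(x_1,x_0)\le \frac{c^n}{1-c}\rho(x_1,x_0).
\]
This tends to \(0\) as \(n\to\infty\), since \(0\le c<1\). Hence \((x_n)\) is Cauchy, and completeness of \((X,\rho)\) yields a limit \(x^{(\star)}\).

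Next I would show that the limit is fixed. The map \(T\) is Lipschitz, hence continuous, so
\[
Tx^{(\star)}=\lim_n Tx_n=\lim_n x_{n+1}=x^{(\star)}.
\]
Alternatively, one can estimate directly:
\[
\rho(Tx^{(\star)},x^{(\star)})\le c\rho(x^{(\star)},x_n)+\rho(x_{n+1},x^{(\star)})\to0 .
\]
For uniqueness, suppose \(y\) is another fixed point. Then \(\rho(x^{(\star)},y)=\rho(Tx^{(\star)},Ty)\le c\rho(x^{(\star)},y)\). Since \(c<1\), this forces \(\rho(x^{(\star)},y)=0\), so \(y=x^{(\star)}\).

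Since the fixed point is unique, the argument above applies to any initial point \(x_0\), and every Picard orbit converges to the same \(x^{(\star)}\). For the rate, induct on \(n\) using \(x^{(\star)}=Tx^{(\star)}\):
\[
\rho(x_{n},x^{(\star)})=\rho(Tx_{n-1},Tx^{(\star)})\le c\rho(x_{n-1},x^{(\star)}),
\]
which gives \(\rho(x_n,x^{(\star)})\le c^n\rho(x_0,x^{(\star)})\).

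None of these steps is a real obstacle. The only point needing care is the Cauchy estimate, where the sum must be bounded uniformly in \(m\) by the geometric tail \(c^n/(1-c)\). Nonemptiness of \(X\) is used only to choose \(x_0\).
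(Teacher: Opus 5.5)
Your proof is correct: it is the standard Cauchy-sequence proof of the Banach contraction principle, with the geometric-tail estimate \(\rho(x_m,x_n)\le \frac{c^n}{1-c}\rho(x_1,x_0)\), continuity of \(T\) for the fixed-point property, the one-line uniqueness argument, and the inductive rate bound \(\rho(x_n,x^{(\star)})\le c^n\rho(x_0,x^{(\star)})\). The paper gives no proof of its own here; it quotes the result from Banach's original work as a preliminary, so your argument is the classical one that the citation refers to.
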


The 
% original 
guarded language operators and their contraction theory were introduced in~\cite{HristovIlchevKulinaZlatanov2026}. 
The fixed-power 
generalization % extension 
considered % required 
in the present paper was established in~\cite{IlchevKiskinovPashevZlatanov2026}.

\begin{definition}[{\cite{IlchevKiskinovPashevZlatanov2026}}]
\label{def:prelim-q-power}
Let \(q\ge1\) be an integer, let \(S\subseteq\Sigma^\ast\) be a seed language, and let
\(
G=\{(u_r,v_r):r=1,\ldots,p\}
\subseteq\Sigma^\ast\times\Sigma^\ast
\)
be a finite nonempty family of guards. 
The corresponding guarded \(q\)-power language operator \(T_q:\mathcal{L}\to\mathcal{L}\) is defined by
\[
T_q(L)
=
S\cup
\left(
\bigcup_{r=1}^{p}u_rLv_r
\right)^q.
\]
Its minimum guard length is
\(
m=\min_{1\le r\le p}(|u_r|+|v_r|).
\)
\end{definition}

For \(q=1\), Definition~\ref{def:prelim-q-power} reduces to the 
% original 
guarded language operator studied in~\cite{HristovIlchevKulinaZlatanov2026}.

\begin{theorem}[{\cite{IlchevKiskinovPashevZlatanov2026}}]
\label{thm:prelim-q-power-contraction}
Let \(T_q\) be the operator from Definition~\ref{def:prelim-q-power}, and assume that \(m\ge1\). Then
\(
d(T_q(L),T_q(M))
\le
2^{-qm}d(L,M)
\)
for all \(L,M\in\mathcal{L}\). Consequently, \(T_q\) has a unique fixed point \(L_q^{(\star)}\in\mathcal{L}\). For every initial language \(L^{(0)}\in\mathcal{L}\), the Picard iteration \(L^{(n+1)}=T_q(L^{(n)})\) converges to \(L_q^{(\star)}\), and
\(
d(L^{(n)},L_q^{(\star)})
\le
2^{-qmn}d(L^{(0)},L_q^{(\star)})
\)
for every \(n\ge0\).
\end{theorem}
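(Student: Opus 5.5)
The strategy is to derive the contraction estimate directly from Lemma~\ref{lem:prelim-finite-depth}. The fixed-point and convergence statements will then follow from Theorem~\ref{thm:prelim-banach} applied in the complete space \((\mathcal{L},d)\) of Theorem~\ref{thm:prelim-language-space}. Set \(c=2^{-qm}\); since \(m\ge1\) and \(q\ge1\), we have \(c\le 1/2<1\).

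Fix \(L\neq M\) and let \(N=\nu(L,M)\), so \(L\) and \(M\) agree on all words of length \(<N\). Put \(R_L=\bigcup_r u_rLv_r\) and \(R_M=\bigcup_r u_rMv_r\).

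\emph{Step 1.} \(R_L\) and \(R_M\) agree on all words of length \(<N+m\). A word \(w\) of length \(<N+m\) lies in \(R_L\) iff \(w=u_rxv_r\) for some \(r\) and some \(x\in L\). For each such decomposition, \(|x|=|w|-|u_r|-|v_r|<N+m-m=N\), so \(x\in L\iff x\in M\). Taking the union over all \(r\) and all decompositions gives \(w\in R_L\iff w\in R_M\).

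\emph{Step 2.} \(R_L^q\) and \(R_M^q\) agree on all words of length \(<N+qm\). This is the main point, and the one I expect to need care. A word \(w\in R_L^q\) factors as \(w=w_1\cdots w_q\) with each \(w_i\in R_L\). Every element of \(R_L\) or \(R_M\) has length at least \(m\). Hence if \(|w|<N+qm\), each factor satisfies \(|w_i|\le |w|-(q-1)m<N+m\). By Step~1, each \(w_i\in R_M\), so \(w\in R_M^q\). The converse is symmetric. The subtle part is using the lower bound \(m\) on the length of the \emph{other} factors, which holds for both \(R_L\) and \(R_M\) regardless of the languages, because each guard contributes at least \(m\) letters.

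\emph{Step 3.} The seed \(S\) is common to \(T_q(L)\) and \(T_q(M)\), so these two languages agree on all words of length \(<N+qm\). By Lemma~\ref{lem:prelim-finite-depth},
\[
d(T_q(L),T_q(M))\le 2^{-(N+qm)}=2^{-qm}d(L,M).
\]
When \(L=M\) the inequality is trivial.

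Theorem~\ref{thm:prelim-banach} then yields the unique fixed point \(L_q^{(\star)}\), convergence of every Picard orbit to it, and the rate bound \(d(L^{(n)},L_q^{(\star)})\le 2^{-qmn}d(L^{(0)},L_q^{(\star)})\).
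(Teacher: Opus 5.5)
Your proof is correct: the length bookkeeping in Steps~1--2 shows that $T_q(L)$ and $T_q(M)$ agree on all words of length less than $\nu(L,M)+qm$, because each candidate interior word $x$ satisfies $|x|\le |w|-m$ and each of the other $q-1$ factors contributes at least $m$ letters whether it lies in $R_L$ or in $R_M$; Lemma~\ref{lem:prelim-finite-depth}, Theorem~\ref{thm:prelim-language-space} and Theorem~\ref{thm:prelim-banach} then give the contraction estimate, the unique fixed point, and the rate bound. The paper does not prove this statement itself but imports it from the cited earlier work, and it obtains the fixed-point and convergence assertions from Banach's principle exactly as you do, so your route is the intended one.
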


\begin{remark}
\label{rem:prelim-fixed-point-notation}
To avoid confusion, we denote the fixed point of \(T_q\) by \(L_q^{(\star)}\) throughout the paper, 
whereas \(L^\ast\) denotes the Kleene star of a language \(L\).
\end{remark}

We next recall the standard 
formal  %NEW
language classes of the Chomsky hierarchy
and the closure properties that will be used 
later. %in the sequel.

\begin{definition}[{\cite{Eilenberg1974,HopcroftMotwaniUllman2007}}]
\label{def:prelim-regular-language}
A deterministic finite automaton over \(\Sigma\) is a tuple \(\mathcal{A}=(Q,\Sigma,\delta,q_0,F)\), where \(Q\) is a finite nonempty set of states, \(\delta:Q\times\Sigma\to Q\) is the transition function, \(q_0\in Q\) is the initial state, and \(F\subseteq Q\) is the set of accepting states. A language is regular if it is recognized by some deterministic finite automaton.
\end{definition}

\begin{definition}[{\cite{Salomaa1973,HopcroftMotwaniUllman2007}}]
\label{def:prelim-chomsky-hierarchy}

A formal grammar is a tuple
\(
\mathcal{G}=(V,\Sigma,P,S_0),
\)
where \(V\) is a finite set of nonterminal symbols,
\(V\cap\Sigma=\varnothing\), \(S_0\in V\) is the start symbol, and
\(P\) is a finite set of productions
\(
\alpha\to\beta,
\)
where
\(
\alpha\in(V\cup\Sigma)^\ast V(V\cup\Sigma)^\ast\)
and
\(\beta\in(V\cup\Sigma)^\ast.\)

The classes of the Chomsky hierarchy are defined as follows.

\begin{enumerate}[label=\textup{(\roman*)}]

\item
A grammar is \emph{unrestricted} (Type~\(0\)) if no further restriction
is imposed on its productions. The languages generated by such
grammars form the class
\(
\mathcal{L}_{\mathrm{un}}.
\)
Equivalently, \(\mathcal{L}_{\mathrm{un}}\) is the class of recursively
enumerable languages.

\item
A grammar is \emph{context-sensitive} (Type~\(1\)) if every production
\(\alpha\to\beta\) satisfies
\(
|\alpha|\le|\beta|,
\)
with the standard possible exception \(S_0\to\varepsilon\), provided
that \(S_0\) does not occur on the right-hand side of any production.
The corresponding class of languages is denoted by
\(
\mathcal{L}_{\mathrm{cs}}.
\)

\item
A grammar is \emph{context-free} (Type~\(2\)) if every production has
the form
\[
A\to\alpha,
\qquad
A\in V,\quad
\alpha\in(V\cup\Sigma)^\ast.
\]
The corresponding class of languages is denoted by
\(
\mathcal{L}_{\mathrm{cf}}.
\)

\item
A grammar is \emph{regular} (Type~\(3\)) if it is right-linear; its
productions may be taken in the form
\(A\to aB\) or \(A\to a\), where \(A,B\in V\) and \(a\in\Sigma\),
with the standard possible exception \(S_0\to\varepsilon\), provided
that \(S_0\) does not occur on the right-hand side of any production.
The corresponding class of languages is denoted by
\(
\mathcal{L}_{\mathrm{reg}}.
\)

\end{enumerate}

Thus the Chomsky hierarchy gives the inclusions
$$
\mathcal{L}_{\mathrm{reg}}
\subset
\mathcal{L}_{\mathrm{cf}}
\subset
\mathcal{L}_{\mathrm{cs}}
\subset
\mathcal{L}_{\mathrm{un}}
\subset
\mathcal{L}.
$$

It is well known that 
for $|\Sigma|>1$ (as in the case considered by us later)  %NEW
all these inclusions are strict.

For later use, we also denote by
\(
\mathcal{L}_{\mathrm{nrn}}
=
\mathcal{L}\setminus\mathcal{L}_{\mathrm{un}}
\)
the class of languages that are not recursively enumerable.
\end{definition}

\begin{theorem}[{\cite{Salomaa1973,HopcroftMotwaniUllman2007}}]
\label{thm:prelim-chomsky-closure}
Each of the classes
\(\mathcal{L}_{\mathrm{reg}},\) \ 
\(\mathcal{L}_{\mathrm{cf}},\) \ 
\(\mathcal{L}_{\mathrm{cs}},\) \ 
\(\mathcal{L}_{\mathrm{un}}\)
is closed under finite union and concatenation, under intersection
with a regular language, and under left and right quotient by a fixed
word.
\end{theorem}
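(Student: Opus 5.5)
The closure statement of Theorem~\ref{thm:prelim-chomsky-closure} is classical, and the plan is to check each operation for each of the four classes. For each class we take a machine or grammar representation and exhibit an explicit construction that stays inside the class.

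\textbf{Finite union and concatenation.} For \(\mathcal{L}_{\mathrm{cf}}\), \(\mathcal{L}_{\mathrm{cs}}\) and \(\mathcal{L}_{\mathrm{un}}\), take grammars \(\mathcal{G}_1,\mathcal{G}_2\) with disjoint nonterminal sets. We may also assume that terminals occur only in productions of the form \(X_a\to a\). This separation matters for the non-context-free types, since otherwise the left-hand sides of one grammar could rewrite material produced by the other.

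For the union we add a new start symbol \(S_0\to S_1\mid S_2\). For the concatenation we add \(S_0\to S_1S_2\).

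In the context-sensitive case the exceptional \(\varepsilon\)-productions need care. If \(\varepsilon\) lies in one factor of a concatenation, we write \(L_1L_2\) as a union of \(L_1'L_2'\) together with the appropriate pieces \(L_1\) or \(L_2\) and possibly \(\{\varepsilon\}\), where \(L_i'=L_i\setminus\{\varepsilon\}\). The non-contracting condition is then preserved.

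For \(\mathcal{L}_{\mathrm{reg}}\) we use the product DFA for the union. For the concatenation we use an NFA with \(\varepsilon\)-links, followed by the subset construction.

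\textbf{Intersection with a regular language.} Let \(R\) be accepted by a DFA \((Q,\Sigma,\delta,q_0,F)\).

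For regular \(L\) we use the product automaton. For context-free \(L\) we apply the triple construction to a Chomsky normal form grammar. The nonterminals are \([p,A,p']\), with productions
\[
[p,A,p'']\to[p,B,p'][p',C,p'']
\quad\text{and}\quad
[p,A,p']\to a \text{ whenever } \delta(p,a)=p'.
\]
The start productions are \(S_0\to[q_0,S,f]\) for \(f\in F\), and \(\varepsilon\) is handled separately.

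For \(\mathcal{L}_{\mathrm{cs}}\) we use the characterization by linear bounded automata. The machine runs the LBA for \(L\) and, in addition, simulates the DFA on the input within the same space bound. For \(\mathcal{L}_{\mathrm{un}}\) the Turing machine first checks membership in \(R\) and then runs the semi-decider for \(L\).

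\textbf{Quotients by a fixed word.} Since \(u^{-1}L\) and \(Lv^{-1}\) are mirror situations, we treat the left quotient. Throughout, \(u\) is fixed.

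For regular \(L\), we change the initial state to \(\delta^\ast(q_0,u)\). For \(L\) in \(\mathcal{L}_{\mathrm{cf}}\), \(\mathcal{L}_{\mathrm{cs}}\) or \(\mathcal{L}_{\mathrm{un}}\), we decide or semi-decide \(w\in u^{-1}L\) by testing \(uw\in L\). For \(\mathcal{L}_{\mathrm{un}}\) this is immediate.

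For \(\mathcal{L}_{\mathrm{cs}}\), an LBA on input \(w\) can simulate the machine on \(uw\). It keeps the \(|u|\) symbols of \(u\) in its finite control, or equivalently uses a constant-factor tape compression. Since linear space is preserved, the class is preserved.

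For \(\mathcal{L}_{\mathrm{cf}}\), we realize the quotient as a composition of operations under which context-free languages are closed. Intersect \(L\) with the regular set \(u\Sigma^\ast\), then apply the rational transduction (generalized sequential machine) that deletes the prefix \(u\). Alternatively, a PDA first nondeterministically simulates reading \(u\) with \(\varepsilon\)-moves and then reads \(w\).

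\textbf{Main obstacle.} I expect the context-sensitive quotient to be the delicate step. Quotients in general do not preserve \(\mathcal{L}_{\mathrm{cs}}\), since every recursively enumerable language is a quotient of a context-sensitive one. The argument must therefore genuinely use that \(u\) is a single fixed word, so that the space overhead is only additive.
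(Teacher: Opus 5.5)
Your proposal is correct. The paper gives no proof of this statement and simply cites Salomaa and Hopcroft--Motwani--Ullman, and your constructions are the standard textbook arguments behind that citation: disjoint-nonterminal grammars with terminal normalization, product automata and the triple construction, and LBA/Turing-machine simulations that keep the fixed word in the finite control. Two small points to tidy: the \(\varepsilon\)-splitting you describe for context-sensitive concatenation is also needed for union, and the context-free quotient case already follows from Theorem~\ref{thm:prelim-cfl-closure}, since \(\{u\}\) is regular.
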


\begin{theorem}[{\cite{Salomaa1973,Eilenberg1974,HopcroftMotwaniUllman2007}}]
\label{thm:prelim-cfl-closure}
The class of context-free languages is closed under finite union, concatenation, Kleene star, homomorphism, inverse homomorphism, intersection with a regular language, and left and right quotient by a regular language. 
In particular, if \(L\in\mathcal{L}_{\mathrm{cf}}\) and \(u,v\in\Sigma^\ast\), then \(uLv\) is context-free, and the left and right quotients of \(L\) by fixed words are context-free.
\end{theorem}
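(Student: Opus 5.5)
This is a standard closure result from the preliminaries, cited to \cite{Salomaa1973,Eilenberg1974,HopcroftMotwaniUllman2007}. I would prove it with the usual grammar and automaton constructions, and then derive the \emph{in particular} clause from the general properties.

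\textbf{Grammar constructions.} Let \(L_1,L_2\) be generated by context-free grammars \(\mathcal{G}_1=(V_1,\Sigma,P_1,S_1)\) and \(\mathcal{G}_2=(V_2,\Sigma,P_2,S_2)\), with \(V_1\cap V_2=\varnothing\). Take a fresh start symbol \(S_0\).
\begin{itemize}
\item Union: add \(S_0\to S_1\mid S_2\).
\item Concatenation: add \(S_0\to S_1S_2\).
\item Kleene star: in \(\mathcal{G}_1\) alone, add \(S_0\to S_1S_0\mid\varepsilon\).
\item Homomorphism \(h\): replace each terminal \(a\) on a right-hand side by the word \(h(a)\).
\end{itemize}
In each case the new productions are still of the form \(A\to\alpha\) with \(A\) a nonterminal, so the grammar stays context-free. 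A routine induction on derivation length then shows that it generates the intended language.

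\textbf{Intersection with a regular language.} I would use the triple construction of Bar-Hillel, Perles and Shamir.
\begin{itemize}
\item Put \(\mathcal{G}\) in Chomsky normal form. Let \(\mathcal{A}=(Q,\Sigma,\delta,q_0,F)\) be a DFA for the regular language \(R\).
\item The new nonterminals are triples \([p,A,q]\) with \(p,q\in Q\). Intended meaning: \([p,A,q]\Rightarrow^\ast w\) iff \(A\Rightarrow^\ast w\) and \(\delta^\ast(p,w)=q\).
\item Productions: \([p,A,q]\to[p,B,r][r,C,q]\) for every rule \(A\to BC\) and every \(r\in Q\); and \([p,A,q]\to a\) whenever \(A\to a\) is a rule and \(\delta(p,a)=q\).
\item The start symbol \(S_0'\) has rules \(S_0'\to[q_0,S_0,f]\) for each \(f\in F\), plus \(S_0'\to\varepsilon\) if \(\varepsilon\in L\cap R\).
\end{itemize}
Correctness follows by induction on derivation length.

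\textbf{Inverse homomorphism and quotients.} Inverse homomorphism is handled by the standard pushdown automaton simulation. A PDA for \(h^{-1}(L)\) reads \(a\) and feeds the word \(h(a)\) to a simulated PDA for \(L\), keeping the unread part of \(h(a)\) in a finite buffer.

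For the quotient by a regular language \(R\), work over a doubled alphabet with a marked copy \(\bar a\) of each letter \(a\). Let \(g\) be the homomorphism erasing markings, so \(g(\bar a)=a\) and \(g(a)=a\); then \(g^{-1}(L)\) consists of all markings of words of \(L\). Intersect with the regular set \(\bar\Sigma^\ast\Sigma^\ast\), which keeps exactly the words whose marked letters form a prefix. Then intersect with \(\bar R\Sigma^\ast\), where \(\bar R\) is \(R\) with every letter marked, which forces that prefix to come from \(R\). Finally apply the homomorphism erasing every marked letter and fixing unmarked ones. The result is \(R^{-1}L\), and it is context-free by the closure properties already established. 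The right quotient is symmetric, with marked letters forming a suffix.

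\textbf{The \emph{in particular} clause.} With a fixed word \(u\) and \(R=\{u\}\), which is regular, the quotient construction gives \(u^{-1}L\); likewise \(Lv^{-1}\) for \(R=\{v\}\). Since \(\{u\}\) and \(\{v\}\) are finite and hence context-free, \(uLv\) is context-free by closure under concatenation.

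The main obstacle is bookkeeping rather than ideas: the correctness proof of the triple construction, and the order of intersections in the marked-alphabet argument for quotients. For this paper it suffices to cite the standard references.
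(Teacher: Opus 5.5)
Your proposal is correct. The grammar constructions for union, concatenation, star and homomorphism, the Bar-Hillel--Perles--Shamir triple construction, the buffered PDA for inverse homomorphism, and the marked-alphabet reduction of \(R^{-1}L\) are exactly the standard textbook arguments, as is the specialization to \(R=\{u\}\), \(R=\{v\}\) together with concatenation with the finite languages \(\{u\},\{v\}\). The paper gives no proof of its own here and only cites those references, so your route coincides with the intended one.
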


As an immediate consequence of the standard closure properties in Theorem~\ref{thm:prelim-cfl-closure}, the guarded \(q\)-power operator from Definition~\ref{def:prelim-q-power} 
preserves context-freeness whenever its seed is context-free.

\begin{corollary}%[{\cite{Salomaa1973,Eilenberg1974,HopcroftMotwaniUllman2007,IlchevKiskinovPashevZlatanov2026}}]
\label{cor:prelim-cfl-invariance}
If \(S\in\mathcal{L}_{\mathrm{cf}}\), then
\(
T_q(\mathcal{L}_{\mathrm{cf}})
\subseteq
\mathcal{L}_{\mathrm{cf}}.
\)
\end{corollary}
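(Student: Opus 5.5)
The plan is to fix an arbitrary \(L\in\mathcal{L}_{\mathrm{cf}}\) and build \(T_q(L)\) step by step from \(L\), \(S\) and the guard words, invoking one closure property from Theorem~\ref{thm:prelim-cfl-closure} at each step. Since \(S\) is context-free by hypothesis, \(T_q(L)\) will then lie in \(\mathcal{L}_{\mathrm{cf}}\) by construction.

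The steps, in order, are as follows.
\begin{enumerate}[label=\textup{(\arabic*)}]
\item For each guard \((u_r,v_r)\in G\), the language \(u_rLv_r\) is context-free. This is stated explicitly in Theorem~\ref{thm:prelim-cfl-closure}, and it is also the concatenation \(\{u_r\}L\{v_r\}\) of the context-free language \(L\) with singleton (hence regular, hence context-free) languages.
\item The guard family \(G\) is finite, so
\(
K=\bigcup_{r=1}^{p}u_rLv_r
\)
is a finite union of context-free languages and is context-free by closure under finite union.
\item The power \(K^q\) is a \(q\)-fold concatenation of \(K\) with itself. I will argue by induction on \(q\), using \(K^{j+1}=K^{j}K\) together with closure under concatenation, to conclude that \(K^q\) is context-free.
\item Finally, \(T_q(L)=S\cup K^q\) is a union of two context-free languages, so it is context-free.
\end{enumerate}

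No step is a genuine obstacle. The only point needing care is that every closure operation is applied finitely many times. This holds because \(p\) and \(q\) are fixed finite numbers, and the argument needs nothing about \(m\) or the contraction property. If one prefers a grammar-level proof, the same chain can be realised directly. Take a grammar for \(L\) with start symbol \(A_L\) and a grammar for \(S\) with start symbol \(A_S\), with disjoint nonterminal sets. Add a nonterminal \(B\) with productions \(B\to u_rA_Lv_r\) for \(r=1,\ldots,p\), and a new start symbol \(Z\) with productions \(Z\to A_S\) and \(Z\to B^q\).
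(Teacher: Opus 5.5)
Your proposal is correct and follows the paper's route: the paper treats this corollary as an immediate consequence of the closure of context-free languages under concatenation with fixed words, finite union and concatenation (Theorem~\ref{thm:prelim-cfl-closure}), and your chain, including the induction giving $K^q$, spells out exactly that. Your grammar-level variant is the same construction the paper later uses in the proof of Theorem~\ref{thm:cfl-fixed-point}, except that you have $A_L$ where the paper has the recursive nonterminal $X$.
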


\begin{lemma}[{\cite{Salomaa1973,HopcroftMotwaniUllman2007}}]
\label{lem:prelim-anbncn}
If the alphabet contains three distinct symbols \(a,b,c\), then the language
\(
\{a^n b^n c^n:n\ge1\}
\)
is not context-free.
\end{lemma}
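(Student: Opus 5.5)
The statement is the classical fact that \(\{a^nb^nc^n:n\ge1\}\) is not context-free. I will argue by contradiction using the pumping lemma for context-free languages, which I would either quote in its standard form or derive from a Chomsky normal form grammar. The derivation goes as follows: a parse tree of a long enough word has a root-to-leaf path longer than the number of nonterminals. That path repeats some nonterminal \(A\). This gives a factorization \(z=uvwxy\) with \(|vwx|\le p\), \(|vx|\ge1\), and \(uv^iwx^iy\) in the language for all \(i\ge0\). The bounds hold because the lower occurrence of \(A\) is chosen among the bottom \(|V|+1\) nonterminals on the path, and because CNF grammars have no unit or \(\varepsilon\)-productions.

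So suppose \(L=\{a^nb^nc^n:n\ge1\}\) is generated by a context-free grammar, and let \(p\) be the pumping constant. Take \(z=a^pb^pc^p\in L\), and write \(z=uvwxy\) with the properties above.

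The key step is the case analysis on the position of \(vwx\). Since \(|vwx|\le p\), the factor \(vwx\) cannot contain both an \(a\) and a \(c\): between any \(a\) and any \(c\) in \(z\) lie \(p\) letters \(b\). Hence \(vx\) involves at most two of the three symbols.

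Now consider \(i=2\) and the word \(z'=uv^2wx^2y\). There are two cases.
\begin{enumerate}[label=\textup{(\roman*)}]
\item If \(v\) or \(x\) contains two distinct letters, then \(z'\) is not in \(a^\ast b^\ast c^\ast\) at all, so \(z'\notin L\).
\item Otherwise \(v\) and \(x\) are each powers of a single letter. Then \(z'\) increases the count of at least one symbol, since \(|vx|\ge1\), while leaving the count of at least one other symbol equal to \(p\). So the three exponents are not all equal and \(z'\notin L\).
\end{enumerate}
Either case contradicts the pumping property, so \(L\) is not context-free.

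I expect no real obstacle here. The only delicate point, if the pumping lemma is proved rather than quoted, is guaranteeing \(|vwx|\le p\) and \(|vx|\ge1\). Both bounds come from the choice of the repeated nonterminal and from the absence of unit and \(\varepsilon\)-productions in Chomsky normal form, as noted above.
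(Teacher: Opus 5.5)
Your proposal is correct. It is the standard pumping-lemma argument with $z=a^pb^pc^p$, and the case split on whether $vwx$ meets two or one of the blocks is handled properly. The paper gives no proof of its own and simply cites Salomaa and Hopcroft--Motwani--Ullman, where this is exactly the argument used, so your approach coincides with the intended one.
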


Finite languages are regular and therefore context-free~\cite{Eilenberg1974,HopcroftMotwaniUllman2007}. 
Hence Theorem~\ref{thm:prelim-finite-dense} implies that \(\mathcal{L}_{\mathrm{cf}}\) is dense in \(\mathcal{L}\). 
It is also standard that over every finite nonempty alphabet there exist languages that are not context-free~\cite{Salomaa1973,HopcroftMotwaniUllman2007}.

\begin{corollary}
\label{cor:prelim-cfl-incomplete}
The metric subspace \((\mathcal{L}_{\mathrm{cf}},d)\) is incomplete.
\end{corollary}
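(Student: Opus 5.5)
The plan is to produce a Cauchy sequence in \((\mathcal{L}_{\mathrm{cf}},d)\) whose limit in the complete space \((\mathcal{L},d)\) lies outside \(\mathcal{L}_{\mathrm{cf}}\). The limit then cannot be a limit inside the subspace, so the subspace is incomplete. We work over the alphabet \(\Sigma\), which by the standing remark in Definition~\ref{def:prelim-chomsky-hierarchy} we may take to have \(|\Sigma|>1\). The text preceding the statement already asserts that non-context-free languages exist over every finite nonempty alphabet, so we fix one such language \(N\subseteq\Sigma^\ast\) with \(N\notin\mathcal{L}_{\mathrm{cf}}\). If \(\Sigma\) contains three distinct symbols, Lemma~\ref{lem:prelim-anbncn} gives an explicit choice \(N=\{a^nb^nc^n:n\ge1\}\). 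In general we simply invoke the quoted standard fact.

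Next we define the truncations \(N_k=\{w\in N:|w|<k\}\) for \(k\ge0\). Each \(N_k\) is finite, hence regular, hence context-free, so \(N_k\in\mathcal{L}_{\mathrm{cf}}\). By construction \(N_k\) and \(N\) agree on every word of length less than \(k\). Lemma~\ref{lem:prelim-finite-depth} therefore gives \(d(N_k,N)\le 2^{-k}\), so \(N_k\to N\) in \((\mathcal{L},d)\). Similarly, for \(j,k\ge K\) the languages \(N_j\) and \(N_k\) agree on all words of length less than \(K\), so \(d(N_j,N_k)\le 2^{-K}\) by the same lemma. Hence \((N_k)\) is Cauchy in the subspace \((\mathcal{L}_{\mathrm{cf}},d)\).

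Finally, suppose \((N_k)\) converged in \((\mathcal{L}_{\mathrm{cf}},d)\) to some \(M\in\mathcal{L}_{\mathrm{cf}}\). Convergence in the subspace is convergence in \(\mathcal{L}\), and limits in the metric space \((\mathcal{L},d)\) are unique by Theorem~\ref{thm:prelim-language-space}. So \(M=N\), which contradicts \(N\notin\mathcal{L}_{\mathrm{cf}}\). Thus the subspace is not complete.

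The only step that is not routine is the existence of a non-context-free language over the given alphabet. Over a unary alphabet one can argue by cardinality: \(\mathcal{L}\) is uncountable while the context-free grammars over \(\Sigma\) are countable. That argument in fact works for any finite alphabet, and it is the argument we would cite if Lemma~\ref{lem:prelim-anbncn} does not apply directly. Everything else follows immediately from the density argument already noted before the statement.
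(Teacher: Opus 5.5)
Your proof is correct and is essentially the paper's argument: the paper combines density of the finite (hence context-free) languages in the complete space \((\mathcal{L},d)\) with the existence of a non-context-free language. Your truncations \(N_k\) just make the approximating Cauchy sequence explicit, and the cardinality argument correctly covers the existence step for every finite nonempty alphabet, including the unary case.
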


\begin{remark}
\label{rem:prelim-cfl-incompleteness}
The preceding corollary is a consequence of the standard density and language-class properties recalled above~\cite{FulopKephart2015,Salomaa1973,HopcroftMotwaniUllman2007}. Its relevance here is that invariance of the context-free class under an operator does not, by itself, ensure that the limit of a convergent Picard orbit is context-free. The context-free nature of the limiting fixed point therefore requires a separate justification.
\end{remark}

The 
following %NEW
algebraic interpretation of context-free grammars by systems of language equations is 
well known. % classical.

\begin{theorem}[{\cite{Hesselink2010,Okhotin2015}}]
\label{thm:prelim-language-equations}
Every context-free grammar can be represented by a finite system of language equations whose right-hand sides are formed from finite unions and concatenations of variables and terminal words. 
The languages generated by the nonterminals of the grammar form the least solution of the corresponding system. Conversely, finite systems of this grammar form provide an equivalent algebraic representation of context-free grammars.
\end{theorem}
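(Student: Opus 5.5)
This is a classical result cited from the literature, so I will only reconstruct the standard argument. The plan is to pass from a context-free grammar \(\mathcal{G}=(V,\Sigma,P,S_0)\) to the system of equations
\[
X_A=\bigcup_{(A\to\alpha)\in P}\alpha[X],\qquad A\in V,
\]
where \(\alpha[X]\) denotes the concatenation obtained from \(\alpha\) by replacing each nonterminal \(B\) with the variable \(X_B\) and each terminal with the corresponding singleton. The right-hand sides are then finite unions of concatenations of variables and terminal words, which is the required form. The converse direction reads the same correspondence backwards. A system \(X_i=\bigcup_j \beta_{ij}\), where each \(\beta_{ij}\) is a finite concatenation of variables and terminal words, yields the productions \(X_i\to\beta_{ij}\). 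These two translations are mutually inverse, which gives the claimed equivalence of representations.

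For the least-solution statement, I will work in the complete lattice \(\mathcal{L}^V\) ordered componentwise by inclusion. The operator \(\Phi\) defined by the right-hand sides is monotone, and it is \(\omega\)-continuous because concatenation and union distribute over directed unions: any word in \(\alpha[\bigcup_k X^{(k)}]\) uses only finitely many factors, so it already lies in some \(\alpha[X^{(k)}]\). By Kleene's fixed-point theorem, the least fixed point is therefore \(\bigcup_n \Phi^n(\varnothing,\dots,\varnothing)\).

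It remains to identify this least fixed point with the generated languages \(L_A=\{w\in\Sigma^\ast: A\Rightarrow^\ast w\}\). I would prove two inclusions.

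\begin{enumerate}
\item[(a)] \(\Phi^n(\varnothing)_A\) equals the set of terminal words having a derivation tree rooted at \(A\) of height at most \(n\). This follows by induction on \(n\): a tree of height at most \(n+1\) is a production \(A\to\alpha\) whose nonterminal children root trees of height at most \(n\). Hence \(\bigcup_n\Phi^n(\varnothing)_A=L_A\).
\item[(b)] The vector \((L_A)_{A\in V}\) is itself a solution. Every derivation begins with some production, and conversely the yields of subtrees can be plugged into \(\alpha\).
\end{enumerate}

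Combining (a), (b) and Kleene's theorem shows that \((L_A)_{A\in V}\) is the least solution.

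The main point needing care is the continuity of \(\Phi\) together with the correspondence between derivations and trees in step (a). The correspondence relies on the fact that a context-free derivation can be reordered into tree form (leftmost derivations), so that derivation length and tree height interact correctly in the induction. This is standard, and I would simply invoke it.
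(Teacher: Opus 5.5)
The paper gives no proof of this statement; it quotes it as a classical result from the cited literature. Your reconstruction is the standard Kleene-iteration argument going back to Ginsburg and Rice, and it is correct.

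Three small remarks:

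First, step (b) is logically redundant. Kleene's theorem together with (a) already identifies \((L_A)_{A\in V}\) with the least fixed point, which is in particular a solution. Step (b) is, however, exactly the fact the paper actually uses in the proof of Theorem~\ref{thm:cfl-fixed-point}. There uniqueness comes from the contraction theorem rather than from leastness.

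Second, in the converse direction you assume the right-hand sides are already of the form \(\bigcup_j\beta_{ij}\) with each \(\beta_{ij}\) a concatenation of variables and terminal words. If unions may appear nested inside concatenations, for example \(X=a(X\cup b)\), first normalize using distributivity of concatenation over union. This leaves the operator, and hence the solution set, unchanged.

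Third, the appeal to leftmost derivations and to how derivation length interacts with tree height is unnecessary. Step (a) only needs the standard fact that \(A\Rightarrow^\ast w\) holds if and only if there is a parse tree with root \(A\) and yield \(w\). With the convention that every complete parse tree has height at least \(1\), your base case \(\Phi^0(\varnothing)_A=\varnothing\) then matches.
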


%===========================================================
% Context-free fixed points
%===========================================================
\section{Context-Free Fixed Points of Guarded \(q\)-Power Language Operators}
\label{sec:context-free-fixed-points}

In this section we study the language-theoretic structure of the unique
fixed point of the general guarded \(q\)-power language operator from
Definition~\ref{def:prelim-q-power}. 
Throughout the section, \(T_q\)
denotes this operator and its minimum guard length is assumed to satisfy
\(m\ge1\). 
By Theorem~\ref{thm:prelim-q-power-contraction}, 
there exists %NEW
a unique % the 
fixed point
\(L_q^{(\star)}\in\mathcal{L}\) 
% exists uniquely,
 and every Picard orbit
converges to it independently of the chosen initial language.

The contraction theorem determines existence, uniqueness, and
convergence, but it does not determine the language-theoretic nature
of the fixed point. In particular, although
Corollary~\ref{cor:prelim-cfl-invariance} shows that the context-free
class is invariant under \(T_q\), this alone is insufficient to conclude
that the limiting language is context-free, since the context-free
subspace is incomplete in the inherited ultrametric. 
Therefore we % We therefore
begin
by determining the language class of the unique fixed point directly.

\begin{theorem}
\label{thm:cfl-fixed-point}
Let \(q\ge1\) and \(m\ge1\). 
If the seed language \(S\) is context-free, then the unique fixed point \(L_q^{(\star)}\) of \(T_q\) is context-free.
\end{theorem}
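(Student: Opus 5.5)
The plan is to write down an explicit context-free grammar whose start symbol generates a language $L$ satisfying $L=T_q(L)$. Uniqueness of the fixed point (Theorem~\ref{thm:prelim-q-power-contraction}) then gives $L=L_q^{(\star)}$. Let $\mathcal{G}_S=(V_S,\Sigma,P_S,S_1)$ be a context-free grammar for $S$. We add fresh nonterminals $X$ (the start symbol) and $Y$, and the productions
\[
X\to S_1,\qquad X\to \underbrace{Y\cdots Y}_{q},\qquad Y\to u_rXv_r\quad(r=1,\ldots,p).
\]
We also keep $P_S$. This is the grammar form of the language equation system $X=S\cup Y^q$, $Y=\bigcup_r u_rXv_r$, in the sense of Theorem~\ref{thm:prelim-language-equations}. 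The construction does not depend on any initial language.

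Let $L_X$ and $L_Y$ be the languages generated from $X$ and $Y$. By Theorem~\ref{thm:prelim-language-equations}, $(L_X,L_Y,\ldots)$ is the least solution of the associated system, so in particular it is a solution. Hence
\[
L_Y=\bigcup_r u_rL_Xv_r \quad\text{and}\quad L_X=S\cup L_Y^q=T_q(L_X).
\]
To avoid leaning on the black-box theorem, one can verify the two inclusions directly by induction on derivation length. If $X\Rightarrow^\ast w$, the first step is either $X\to S_1$, which gives $w\in S$, or $X\to Y^q$. In the second case $w=w_1\cdots w_q$, each $w_i$ is derived from $Y$, so $w_i=u_{r_i}x_iv_{r_i}$ with $x_i$ derived from $X$ by a shorter derivation. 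This shows $L_X\subseteq T_q(L_X)$. The reverse inclusion $T_q(L_X)\subseteq L_X$ follows by assembling derivations in the same way.

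Thus $L_X$ is a fixed point of $T_q$. Since $m\ge1$, Theorem~\ref{thm:prelim-q-power-contraction} says $T_q$ has exactly one fixed point, so $L_X=L_q^{(\star)}$. Since $L_X$ is generated by a context-free grammar, $L_q^{(\star)}\in\mathcal{L}_{\mathrm{cf}}$.

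The key point to stress is the step from "least solution" to "the fixed point". A context-free grammar naturally produces only the \emph{least} fixed point, and in general a language equation can have many solutions. Here, though, the guard condition $m\ge1$ makes $T_q$ a contraction, so the least fixed point coincides with the unique one. An alternative check avoids the contraction theorem: show by induction on $|w|$ that membership in any fixed point is determined by $S$. If $w\in L_Y^q$, each factor has the form $u_rxv_r$ with $|x|\le|w|-m<|w|$, so membership of $w$ depends only on $S$ and on membership of strictly shorter words. This word-length induction is the only nonroutine ingredient. I expect no genuine obstacle beyond the care needed when $q=1$ or when some guards have an empty side, since $m\ge1$ still guarantees $|x|<|w|$ in every case.
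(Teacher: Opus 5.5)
Your proposal is correct and matches the paper's proof: it uses the same grammar (your $Y$ is the paper's $B$), derives that the generated language $K$ satisfies $K=T_q(K)$, and identifies $K$ with $L_q^{(\star)}$ by the uniqueness in Theorem~\ref{thm:prelim-q-power-contraction}. Your direct derivation-length verification and the word-length uniqueness argument are valid additions that the paper does not include.
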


\begin{proof}
Let \(\mathcal{G}_S=(V,\Sigma,P,S_0)\) be a context-free grammar generating \(S\). 
Choose two 
new % fresh 
symbols \(X,B\notin V\cup\Sigma\), regard them as new nonterminals, 
and define the grammar \(\mathcal{G}_q'=(V\cup\{X,B\},\Sigma,P_q',X)\), where
\[
P_q'=P\cup\{X\to S_0,\;X\to B^q\}\cup\{B\to u_rXv_r:r=1,\ldots,p\}.
\]
Here \(B^q\) denotes the word consisting of \(q\) consecutive occurrences of the nonterminal \(B\). Since \(V\) and \(P\) are finite and only finitely many new nonterminals and productions are added, \(\mathcal{G}_q'\) is a context-free grammar.

Let \(K\) denote the language generated from \(X\). 
Since the only productions with left-hand side \(B\) are \(B\to u_rXv_r\), 
the language generated from \(B\) is exactly \(\bigcup_{r=1}^{p}u_rKv_r\). 
Hence \(X\to B^q\) generates exactly \(\left(\bigcup_{r=1}^{p}u_rKv_r\right)^q\), 
while \(X\to S_0\) generates \(S\). Therefore
\[
K=S\cup\left(\bigcup_{r=1}^{p}u_rKv_r\right)^q=T_q(K).
\]
Thus \(K\) is a fixed point of \(T_q\).

According % By 
Theorem~\ref{thm:prelim-q-power-contraction}, \(T_q\) has a unique fixed point 
$L_q^{(\star)}$ %NEW
in \(\mathcal{L}\). 
Consequently, \(K=L_q^{(\star)}\). Since \(\mathcal{G}_q'\) is a context-free grammar generating \(K\), 
the language \(L_q^{(\star)}\) is context-free.
\end{proof}

\begin{proposition}
\label{prop:effective-fixed-point-grammar}
Let \(q\ge1\), \(m\ge1\), and suppose that \(S\) is generated by a context-free grammar \(\mathcal{G}_S=(V,\Sigma,P,S_0)\). A context-free grammar for \(L_q^{(\star)}\) can be constructed effectively from \(\mathcal{G}_S\), \(q\), and \(G\). The construction introduces exactly two additional nonterminals and \(p+2\) additional productions. In particular, it produces a grammar with \(|V|+2\) nonterminals and \(|P|+p+2\) productions.
\end{proposition}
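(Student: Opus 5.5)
The plan is to reuse the grammar \(\mathcal{G}_q'=(V\cup\{X,B\},\Sigma,P_q',X)\) built in the proof of Theorem~\ref{thm:cfl-fixed-point}. I will check three things about it: that it can be produced by an explicit algorithm from the finite data \(\mathcal{G}_S\), \(q\) and \(G\); that it generates \(L_q^{(\star)}\); and that its size is as claimed.

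\emph{Effectiveness.} The input is a finite object: the grammar \(\mathcal{G}_S\), the integer \(q\), and the finite list of guard pairs \((u_r,v_r)\), \(r=1,\ldots,p\). The algorithm runs as follows.
\begin{enumerate}[label=\textup{(\arabic*)}]
\item Choose two symbols \(X,B\) not occurring in \(V\cup\Sigma\). This is always possible, for instance by tagging fresh names, since \(V\cup\Sigma\) is finite.
\item Copy \(P\) unchanged.
\item Write down the production \(X\to S_0\).
\item Write down the production \(X\to B^q\), whose right-hand side is the string of length \(q\) over \(\{B\}\).
\item For each \(r\), write down \(B\to u_rXv_r\), obtained by concatenating the given guard words with the symbol \(X\).
\end{enumerate}
Each step is a finite syntactic manipulation, so the whole procedure is effective. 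The correctness argument is the one already given in Theorem~\ref{thm:cfl-fixed-point}. The language \(K\) generated from \(X\) satisfies \(K=T_q(K)\). The uniqueness part of Theorem~\ref{thm:prelim-q-power-contraction} then forces \(K=L_q^{(\star)}\). I will cite this argument rather than repeat it.

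\emph{Counting.} The new nonterminals are exactly \(X\) and \(B\). They are distinct from each other and lie outside \(V\), so the nonterminal set has \(|V|+2\) elements. The added productions are \(X\to S_0\), \(X\to B^q\), and the \(p\) productions \(B\to u_rXv_r\). None of them belongs to \(P\), because each has left-hand side \(X\) or \(B\), which do not occur in \(\mathcal{G}_S\). The two \(X\)-productions differ from each other because \(S_0\neq B^q\). The \(B\)-productions are pairwise distinct because the guard pairs in \(G\) are distinct elements of a set: if \((u_r,v_r)\neq(u_s,v_s)\), then \(u_rXv_r\neq u_sXv_s\), since \(X\) occurs exactly once and the word splits uniquely around it. Hence exactly \(p+2\) new productions are added, giving \(|P|+p+2\) productions in total.

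The only point needing care, and the expected obstacle, is this exactness of the count. It relies on \(G\) being a set of \(p\) distinct pairs, which is how Definition~\ref{def:prelim-q-power} presents it, and on the freshness of \(X\) and \(B\). If guards were listed with repetition, the bound would still hold but with duplicates merged, so I will state the count under the set interpretation.
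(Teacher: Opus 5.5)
Your proposal is correct and follows essentially the same route as the paper: you reuse the grammar \(\mathcal{G}_q'\) from Theorem~\ref{thm:cfl-fixed-point}, cite that theorem (fixed-point equation plus uniqueness) for correctness, and count the added nonterminals and productions. Your extra check that the \(p+2\) new productions are genuinely new and pairwise distinct makes explicit a point the paper's count leaves implicit; it relies on \(X,B\) being fresh and on the guard pairs being distinct.
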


\begin{proof}
Use the grammar \(\mathcal{G}_q'\) constructed in the proof of Theorem~\ref{thm:cfl-fixed-point}. Besides the productions of \(P\), it contains the two productions \(X\to S_0\) and \(X\to B^q\), together with one production \(B\to u_rXv_r\) for each \(r=1,\ldots,p\). Thus exactly two new nonterminals and \(p+2\) new productions are introduced. Theorem~\ref{thm:cfl-fixed-point} shows that this grammar generates \(L_q^{(\star)}\).
\end{proof}

\begin{remark}
\label{rem:effective-independence}
The construction in Proposition~\ref{prop:effective-fixed-point-grammar} depends only on a grammar for the seed, the fixed exponent \(q\), and the guard family. It is completely independent of the initial language used in the Picard iteration. In particular, no effective representation of the initial language is required in order to construct a context-free grammar generating the limiting fixed point.
\end{remark}

\begin{corollary}
\label{cor:arbitrary-initial-language}
Let \(q\ge1\), \(m\ge1\), and assume that \(S\) is context-free. For every initial language \(L^{(0)}\in\mathcal{L}\), 
the Picard iteration \(L^{(n+1)}=T_q(L^{(n)})\) converges to the same context-free fixed point \(L_q^{(\star)}\), and
\begin{equation}
d(L^{(n)},L_q^{(\star)})\le2^{-qmn}d(L^{(0)},L_q^{(\star)})
\qquad\text{for every }n\ge0.
\label{eq:picard-cfl-limit}
\end{equation}
\end{corollary}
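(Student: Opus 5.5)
This corollary follows by combining two results already in hand. The first is the contraction theorem (Theorem~\ref{thm:prelim-q-power-contraction}), which gives convergence and the rate. The second is Theorem~\ref{thm:cfl-fixed-point}, which identifies the language class of the limit. We will not use the invariance of the context-free class under \(T_q\) (Corollary~\ref{cor:prelim-cfl-invariance}). As Remark~\ref{rem:prelim-cfl-incompleteness} stresses, that invariance is useless here: \(L^{(0)}\) is arbitrary, and the context-free subspace is incomplete anyway.

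The first step fixes an arbitrary \(L^{(0)}\in\mathcal{L}\), with no language-theoretic assumption, and forms the Picard orbit \(L^{(n+1)}=T_q(L^{(n)})\). Since \(m\ge1\), Theorem~\ref{thm:prelim-q-power-contraction} applies on the complete space \((\mathcal{L},d)\) of Theorem~\ref{thm:prelim-language-space}. It yields a unique fixed point \(L_q^{(\star)}\) of \(T_q\), convergence \(L^{(n)}\to L_q^{(\star)}\), and exactly the estimate \eqref{eq:picard-cfl-limit}. The word ``same'' in the statement is justified by uniqueness: the limit of every orbit is a fixed point, hence equals \(L_q^{(\star)}\), whatever \(L^{(0)}\) is. To be careful, one checks that the limit is a fixed point. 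A \(2^{-qm}\)-Lipschitz map is continuous, so passing to the limit in \(L^{(n+1)}=T_q(L^{(n)})\) gives \(L=T_q(L)\). Alternatively, one can simply cite the Banach principle (Theorem~\ref{thm:prelim-banach}), which is already packaged into Theorem~\ref{thm:prelim-q-power-contraction}.

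The second step uses that \(S\) is context-free. Theorem~\ref{thm:cfl-fixed-point} then gives that \(L_q^{(\star)}\) is context-free. Indeed, it is generated by the explicit grammar \(\mathcal{G}_q'\) of Proposition~\ref{prop:effective-fixed-point-grammar}, which does not depend on \(L^{(0)}\) (Remark~\ref{rem:effective-independence}). Combining the two steps proves the corollary.

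There is no real obstacle. The only point needing attention is logical: the context-freeness of the limit must come from Theorem~\ref{thm:cfl-fixed-point} and not from the orbit. The iterates \(L^{(n)}\) may fail to be context-free, for instance when \(L^{(0)}\) is not. Even when they are context-free, the incompleteness in Corollary~\ref{cor:prelim-cfl-incomplete} blocks any closure argument.
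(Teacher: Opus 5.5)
Your proposal is correct and follows essentially the same route as the paper: convergence of every orbit to the unique fixed point, together with the rate estimate~\eqref{eq:picard-cfl-limit}, comes from Theorem~\ref{thm:prelim-q-power-contraction}, and context-freeness of that common limit comes from Theorem~\ref{thm:cfl-fixed-point}. Your extra remarks (the continuity argument showing the limit is a fixed point, and why invariance of the context-free class cannot be used here) are accurate. They are not needed beyond the two results the paper cites.
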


\begin{proof}
The convergence and estimate~\eqref{eq:picard-cfl-limit} follow from Theorem~\ref{thm:prelim-q-power-contraction}, while Theorem~\ref{thm:cfl-fixed-point} shows that the common limit \(L_q^{(\star)}\) is context-free.
\end{proof}

\begin{remark}
\label{rem:arbitrary-complexity}
No language-theoretic restriction is imposed on \(L^{(0)}\) in Corollary~\ref{cor:arbitrary-initial-language}. 
In particular, the initial language need not be context-free, decidable, or recursively enumerable. 
It may be an arbitrary element of \(\mathcal{P}(\Sigma^\ast)\), 
while the unique limit of its Picard orbit is nevertheless context-free whenever the seed is context-free. 
Thus the language-theoretic complexity of the initial state may affect the finite orbit 
but does not determine the language class of its unique limit.
\end{remark}

\begin{corollary}
\label{cor:cfl-picard-orbit}
Let \(q\ge1\), \(m\ge1\), and assume that both \(S\) and \(L^{(0)}\) are context-free. Then every Picard iterate \(L^{(n)}\) is context-free and \(L^{(n)}\to L_q^{(\star)}\), where \(L_q^{(\star)}\) is context-free.
\end{corollary}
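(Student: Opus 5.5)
The claim splits into two parts, each settled by results already available: context-freeness of every finite iterate \(L^{(n)}\) follows by induction from Corollary~\ref{cor:prelim-cfl-invariance}, and convergence to a context-free limit is Corollary~\ref{cor:arbitrary-initial-language}.

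For the induction, the base case is the hypothesis \(L^{(0)}\in\mathcal{L}_{\mathrm{cf}}\). For the step, assume \(L^{(n)}\in\mathcal{L}_{\mathrm{cf}}\). Since \(S\) is context-free, Corollary~\ref{cor:prelim-cfl-invariance} gives \(T_q(\mathcal{L}_{\mathrm{cf}})\subseteq\mathcal{L}_{\mathrm{cf}}\). Hence \(L^{(n+1)}=T_q(L^{(n)})\) is context-free.

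If one wants to unpack that corollary rather than cite it, the argument goes through Theorem~\ref{thm:prelim-cfl-closure}:
\begin{enumerate}[label=\textup{(\roman*)}]
\item each \(u_rL^{(n)}v_r\) is context-free, since it is concatenation with the singleton languages \(\{u_r\}\) and \(\{v_r\}\);
\item the finite union \(\bigcup_{r=1}^{p}u_rL^{(n)}v_r\) is context-free;
\item its \(q\)-fold concatenation is context-free;
\item the union with \(S\) is context-free.
\end{enumerate}

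For convergence, Corollary~\ref{cor:arbitrary-initial-language}, applied with the present context-free \(L^{(0)}\), gives \(L^{(n)}\to L_q^{(\star)}\) together with the rate estimate~\eqref{eq:picard-cfl-limit}. Theorem~\ref{thm:cfl-fixed-point} shows that \(L_q^{(\star)}\) is context-free.

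No step here is a real obstacle. The one point to keep in mind, in line with Remark~\ref{rem:prelim-cfl-incompleteness}, is that context-freeness of the limit must not be inferred from context-freeness of the iterates, because \((\mathcal{L}_{\mathrm{cf}},d)\) is incomplete. It has to come from the independent grammar construction behind Theorem~\ref{thm:cfl-fixed-point}.
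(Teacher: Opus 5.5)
Your proof is correct and matches the paper's argument: induction on \(n\) via Corollary~\ref{cor:prelim-cfl-invariance} for the iterates, Corollary~\ref{cor:arbitrary-initial-language} for convergence, and Theorem~\ref{thm:cfl-fixed-point} for context-freeness of the limit. Your optional unpacking through Theorem~\ref{thm:prelim-cfl-closure} is accurate, and so is your caution that the limit's context-freeness cannot be inferred from the iterates (this is the point of Remark~\ref{rem:not-completeness}).
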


\begin{proof}
The assertion follows by induction from Corollary~\ref{cor:prelim-cfl-invariance}. 
The initial language \(L^{(0)}\) is context-free by assumption, and whenever \(L^{(n)}\) is context-free, Corollary~\ref{cor:prelim-cfl-invariance} gives \(L^{(n+1)}=T_q(L^{(n)})\in\mathcal{L}_{\mathrm{cf}}\). 
The convergence % Convergence 
follows from Corollary~\ref{cor:arbitrary-initial-language}
 and Theorem~\ref{thm:cfl-fixed-point} gives \(L_q^{(\star)}\in\mathcal{L}_{\mathrm{cf}}\).
\end{proof}

\begin{remark}
\label{rem:not-completeness}
The conclusion \(L_q^{(\star)}\in\mathcal{L}_{\mathrm{cf}}\) does not follow merely 
from Corollary~\ref{cor:cfl-picard-orbit}. 
A convergent sequence of context-free languages may have 
a non-context-free limit in the complete ambient language space because \((\mathcal{L}_{\mathrm{cf}},d)\) is incomplete. Theorem~\ref{thm:cfl-fixed-point} is therefore a structural statement about the fixed-point equation itself rather than a consequence of invariance of the context-free class.
\end{remark}

%===========================================================
% Classification of finite Picard iterates
%===========================================================
\section{Classification of Language Classes of the Finite Picard Iterates
and of the Fixed Point 
for a %: A 
Special Case of the Guarded \(q\)-Power Operator}
\label{sec:classification-picard-orbits}

The preceding section determines the language-theoretic nature of the
unique fixed point for the general guarded \(q\)-power operator.
We now turn to a different question: what can be said about the
language classes of the finite Picard iterates before the 
limit % limiting 
fixed point is reached?

For the general operator from Definition~\ref{def:prelim-q-power},
no classification of the finite iterates solely in terms of the
initial language can hold without additional assumptions. Indeed,
consider the admissible choice
\(
S=\Sigma^\ast.
\)
Then, for every language \(L\subseteq\Sigma^\ast\),
\(
T_q(L)
=
\Sigma^\ast
\cup
\left(
\bigcup_{r=1}^{p}u_rLv_r
\right)^q
=
\Sigma^\ast.
\)
Consequently, independently of the initial language \(L^{(0)}\),
one has
\(
L^{(1)}=\Sigma^\ast
\)
and all subsequent Picard iterates coincide with the fixed point
\(\Sigma^\ast\)
which is regular. %NEW
Thus the seed may completely dominate the recursive
part of the operator and erase, already after one step, any
language-theoretic distinction between different initial languages.

The preceding example shows that the general operator \(T_q\) may
erase all information about the initial language after a single
Picard step. 
To prevent the seed from masking the recursive
contribution, we introduce a marked special case, denoted by
\(T_{qm}\). 
Two 
new % fresh 
symbols 
mark the left and right guards and %NEW
distinguish the recursively generated
words from the seed, while both the seed and the initial language
remain over the original alphabet. 
This separation is the basis
for the 
possibility of %NEW
exact recovery of the initial language from 
every % NEW
finite Picard
iterate % iterates 
and for the classification established below under the
corresponding assumptions on the seed
and the initial languages. %NEW 

\begin{definition}
\label{def:marked-q-power}
Let \(\Sigma\) be a finite nonempty alphabet, let \(x,y\notin\Sigma\)
be two distinct 
new % fresh 
symbols, and put
\(
\Gamma=\Sigma\cup\{x,y\}.
\)
Let \(q\ge1\) be an integer and let \(S\subseteq\Sigma^\ast\) be a
seed language. The marked single-guard \(q\)-power language operator
\(
T_{qm}:\mathcal{P}(\Gamma^\ast)\to\mathcal{P}(\Gamma^\ast)
\)
is defined 
for $L\subseteq\Gamma^\ast$ %NEW
by
\[T_{qm}(L)=S\cup(xLy)^q.    %,\  L\subseteq\Gamma^\ast.
\]
Here the letter \(m\) in the subscript is a fixed label indicating
the marked case and does not denote the minimum guard length.
\end{definition}

The operator \(T_{qm}\) is the special case of
Definition~\ref{def:prelim-q-power} over the enlarged alphabet
\(\Gamma\) obtained by taking \(p=1\) and the single guard
\(
(u_1,v_1)=(x,y).
\)
The seed and the initial language are assumed to satisfy
\(
S,L^{(0)}\subseteq\Sigma^\ast\subset\Gamma^\ast.
\)

Throughout the remainder of this section, the ambient alphabet is
\(\Gamma\). Accordingly, we write
\(
\mathcal{L}=\mathcal{P}(\Gamma^\ast),
\)
and the symbols
\(\mathcal{L}_{\mathrm{reg}},\) \ 
\(\mathcal{L}_{\mathrm{cf}},\) \ 
\(\mathcal{L}_{\mathrm{cs}},\) \ 
\(\mathcal{L}_{\mathrm{un}}\)
refer to the corresponding language classes over \(\Gamma\).
Languages over the original alphabet \(\Sigma\) are identified with
their natural copies in \(\Gamma^\ast\). This enlargement of the
ambient alphabet does not change their membership in any of the
classes of the Chomsky hierarchy. In particular,
\(
\mathcal{L}_{\mathrm{nrn}}
=
\mathcal{L}\setminus\mathcal{L}_{\mathrm{un}}
\)
denotes, in this section, the class of non-recursively enumerable
languages over \(\Gamma\).

The same convention applies to the language ultrametric. Throughout this section, \(d\) denotes the first-disagreement ultrametric of Definition~\ref{def:prelim-language-ultrametric} on \(\mathcal{P}(\Gamma^\ast)\), with word lengths taken in \(\Gamma^\ast\). Consequently, Theorem~\ref{thm:prelim-language-space} and the contraction result in Theorem~\ref{thm:prelim-q-power-contraction} apply to the marked operator over the ambient alphabet \(\Gamma\).

The role of the 
new % fresh 
symbols \(x\) and \(y\) is to mark the recursive
part of the dynamics. Since \(S\subseteq\Sigma^\ast\), no word of the
seed contains either marker, whereas every word produced through the
recursive term \((xLy)^q\) necessarily contains both \(x\) and \(y\).
Hence
\(
S\cap(xLy)^q=\varnothing
\)
for every \(L\subseteq\Gamma^\ast\).
Thus, once information from a previous Picard iterate passes through
the guard, it remains distinguishable from the newly inserted seed
contribution.

The results below show that this separation has a strong consequence:
the initial language remains exactly traceable and recoverable from
every finite Picard iterate. This will ultimately allow us to classify
the finite orbit according to the exact position of the initial
language in the Chomsky hierarchy, while the limiting fixed point
remains context-free.

\begin{remark}
\label{rem:marked-operator-fixed-point}
Since the marked operator is 
a special case % an instance 
of the general guarded
\(q\)-power construction over the ambient alphabet \(\Gamma\), all
fixed-point results from the preceding section remain applicable.
In particular, whenever \(S\in\mathcal{L}_{\mathrm{cf}}\), the operator
has a unique context-free fixed point \(L_q^{(\star)}\), and every Picard
orbit converges to this same fixed point independently of the initial
language.
\end{remark}

For the remainder of this section, we use the following conventions.
Let \(T_{qm}\) be the marked operator from
Definition~\ref{def:marked-q-power}, with fixed integer \(q\ge1\)
and seed \(S\subseteq\Sigma^\ast\). The alphabets \(\Sigma\) and
\(\Gamma=\Sigma\cup\{x,y\}\) are as specified in that definition.
We write \(\mathcal{L}=\mathcal{P}(\Gamma^\ast)\), and all language
classes are understood over \(\Gamma\).
For an arbitrary initial language \(H\subseteq\Sigma^\ast\), its
Picard orbit is denoted by
\(
L^{(0)}=H,\quad
L^{(n+1)}=T_{qm}(L^{(n)}),\quad n\ge0.
\)
The unique fixed point of \(T_{qm}\) is denoted by
\(L_q^{(\star)}\).

For each word \(z\in\Sigma^\ast\), define the auxiliary words
\(z_n,\alpha_n,\beta_n\) by
\(z_0=z\), \(\alpha_0=\beta_0=\varepsilon\), and
\(z_{n+1}=(xz_ny)^q,\quad \alpha_{n+1}=x\alpha_n,\quad \beta_{n+1}=\beta_n y(xz_ny)^{q-1},\) \  \(n\ge0.\)
Here \(w^0=\varepsilon\) for every word \(w\).
We also set
\(
R_n=\alpha_n\Sigma^\ast\beta_n
\)
for every \(n\ge0\).

\begin{theorem}
\label{thm:finite-time-trace}
Let \(H\ne\varnothing\) and choose \(z\in H\).
With the notation above, for every \(n\ge0\), one has
\(z_n\in L^{(n)}\), the language \(R_n\) is regular, and
\begin{equation}
L^{(n)}\cap R_n=\alpha_nH\beta_n.
\label{eq:finite-time-trace}
\end{equation}
\end{theorem}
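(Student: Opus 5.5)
The plan is to prove all three assertions together by induction on \(n\). The statements \(z_n\in L^{(n)}\) and the regularity of \(R_n\) are routine; the real work is the trace identity~\eqref{eq:finite-time-trace}. That identity rests on a unique-factorization property of words in \((xMy)^q\) when the words of \(M\) are balanced in the markers.

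First, the routine parts. For \(n=0\) we have \(z_0=z\in H=L^{(0)}\). If \(z_n\in L^{(n)}\), then \(z_{n+1}=(xz_ny)^q\in(xL^{(n)}y)^q\subseteq T_{qm}(L^{(n)})=L^{(n+1)}\). Since \(\alpha_n,\beta_n\) are fixed words, \(R_n=\alpha_n\Sigma^\ast\beta_n\) is a concatenation of regular languages and hence regular, by Theorem~\ref{thm:prelim-chomsky-closure}. For the trace identity at \(n=0\), we have \(R_0=\Sigma^\ast\) and \(L^{(0)}=H\subseteq\Sigma^\ast\), so \(L^{(0)}\cap R_0=H=\alpha_0H\beta_0\).

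Next, an auxiliary lemma. Treat \(x\) as an opening bracket, \(y\) as a closing bracket, and letters of \(\Sigma\) as neutral. Call a word \emph{balanced} if it has equally many \(x\)'s and \(y\)'s and every prefix has at least as many \(x\)'s as \(y\)'s. I would prove, by induction on \(n\), the following facts.
\begin{enumerate}[label=\textup{(\alph*)}]
\item Every word of \(L^{(n)}\) is balanced. This holds because \(H,S\subseteq\Sigma^\ast\), and \(xwy\) and concatenations of balanced words are balanced.
\item Every word \(\alpha_nu\beta_n\) with \(u\in\Sigma^\ast\), and every \(z_n\), is balanced. This follows from the recursions: \(\alpha_{n+1}u\beta_{n+1}=x(\alpha_nu\beta_n)y(xz_ny)^{q-1}\).
\item Unique parsing: if \(w_1,\dots,w_q\) and \(w_1',\dots,w_q'\) are balanced and \(xw_1y\cdots xw_qy=xw_1'y\cdots xw_q'y\), then \(w_i=w_i'\) for all \(i\). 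Indeed, the first block \(xw_1y\) ends exactly at the first position where the bracket depth returns to \(0\), and one then inducts on \(q\).
\end{enumerate}

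For the inductive step, fix \(n\ge0\). Since \(\alpha_{n+1}\) begins with \(x\), every word of \(R_{n+1}\) contains \(x\), whereas \(S\subseteq\Sigma^\ast\) does not; hence \(S\cap R_{n+1}=\varnothing\) and
\[
L^{(n+1)}\cap R_{n+1}=(xL^{(n)}y)^q\cap x\alpha_n\Sigma^\ast\beta_ny(xz_ny)^{q-1}.
\]
\emph{Inclusion \(\supseteq\).} Take \(u\in H\). Then \(\alpha_nu\beta_n\in L^{(n)}\) by the induction hypothesis, and \(z_n\in L^{(n)}\) by the first part. Therefore \(x(\alpha_nu\beta_n)y(xz_ny)^{q-1}\in(xL^{(n)}y)^q\). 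This word equals \(\alpha_{n+1}u\beta_{n+1}\), which lies in \(R_{n+1}\).

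\emph{Inclusion \(\subseteq\).} Suppose a word \(w=xw_1y\cdots xw_qy\), with each \(w_i\in L^{(n)}\), also equals \(x(\alpha_nu\beta_n)y(xz_ny)^{q-1}\) for some \(u\in\Sigma^\ast\). By (a) and (b), all the factors involved are balanced. Unique parsing (c) then gives \(w_1=\alpha_nu\beta_n\) and \(w_i=z_n\) for \(i\ge2\). Hence \(w_1\in L^{(n)}\cap R_n=\alpha_nH\beta_n\), and because \(\alpha_n,\beta_n\) are fixed words this forces \(u\in H\). It follows that \(w\in\alpha_{n+1}H\beta_{n+1}\).

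The main obstacle I expect is the unique-parsing step (c), together with checking the balancedness claims (a) and (b) carefully. Note that the hypothesis \(H\ne\varnothing\) enters only to supply \(z_n\in L^{(n)}\) for the \(q-1\) trailing blocks in the inclusion \(\supseteq\).
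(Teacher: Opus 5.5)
Your proposal is correct and follows essentially the same route as the paper: balanced (Dyck) marker structure of every iterate and of every word in $R_n$, unique top-level factorization of words in $(xL^{(n)}y)^q$, disjointness of $S$ from $R_{n+1}$, and induction on the trace identity. Your direct check that $\alpha_n u\beta_n$ is balanced for $u\in\Sigma^\ast$ proves the same fact as the paper's projection identity $\pi(r)=\pi(z_n)$ for $r\in R_n$.
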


\begin{proof}
We first show that \(z_n\in L^{(n)}\) for every \(n\ge0\).
Since \(z_0=z\in H=L^{(0)}\), the assertion holds for \(n=0\).
If \(z_n\in L^{(n)}\), then
\(
z_{n+1}=(xz_ny)^q\in(xL^{(n)}y)^q\subseteq L^{(n+1)}.
\)
Hence the assertion follows by induction.

Since
\(
R_n=\alpha_n\Sigma^\ast\beta_n
\)
is obtained from the regular language \(\Sigma^\ast\) by concatenation with the two fixed words \(\alpha_n\) and \(\beta_n\), the language \(R_n\) is regular for every \(n\ge0\).

Let
\(
\pi:\Gamma^\ast\longrightarrow\{x,y\}^\ast
\)
be the homomorphism 
that erases every symbol of the original alphabet \(\Sigma\) and keeps only the symbols \(x\) and \(y\). 
For a word \(v\in\{x,y\}^\ast\), define its balance by
\(
\delta(v)=|v|_x-|v|_y.
\)
We establish two structural properties.

First, for every \(k\ge0\) and every \(w\in L^{(k)}\), the projection \(\pi(w)\) is a Dyck word: its total balance is zero and every prefix has nonnegative balance.

For \(k=0\), one has \(L^{(0)}=H\subseteq\Sigma^\ast\). Since neither \(x\) nor \(y\) occurs in any word of \(H\), one has \(\pi(w)=\varepsilon\) for every \(w\in L^{(0)}\).

Assume that the assertion holds for every word of \(L^{(k)}\). Since \(S\subseteq\Sigma^\ast\), every seed word \(s\in S\) also satisfies \(\pi(s)=\varepsilon\). Every other word of \(L^{(k+1)}\) belongs to \((xL^{(k)}y)^q\) and is therefore a concatenation of \(q\) blocks of the form \(xwy\), with \(w\in L^{(k)}\). By the induction hypothesis, \(\pi(w)\) is a Dyck word, and therefore \(x\pi(w)y\) is again a Dyck word. A concatenation of \(q\) such words is also a Dyck word. Hence the assertion holds for \(L^{(k+1)}\).

Moreover, if \(w\in L^{(k)}\), then \(x\pi(w)y\) has strictly positive balance on every proper nonempty prefix and returns to balance zero only at its final symbol. Consequently, every word in \((xL^{(k)}y)^q\) has a unique decomposition into its \(q\) top-level factors
\(
xw_1y,\ldots,xw_qy,\)
\(w_i\in L^{(k)},\)
and the factor boundaries are determined uniquely by the successive returns of the \(x\)-\(y\) balance to zero.

Second, we claim that
\(
\pi(r)=\pi(z_n)
\)
for every \(r\in R_n\).

For \(n=0\), one has \(R_0=\Sigma^\ast\), while \(z_0=z\in H\subseteq\Sigma^\ast\). Hence
\(
\pi(r)=\varepsilon=\pi(z_0)
\)
for every \(r\in R_0\).

Assume that the claim holds for some \(n\ge0\). From the recursive definitions of \(\alpha_n\) and \(\beta_n\),
\(
R_{n+1}=xR_ny(xz_ny)^{q-1}.
\)
Thus every \(r'\in R_{n+1}\) has the form
\(
r'=xry(xz_ny)^{q-1}
\)
for some \(r\in R_n\). Using the induction hypothesis gives
\[
\pi(r')
=
x\pi(r)y\bigl(x\pi(z_n)y\bigr)^{q-1}
=
\bigl(x\pi(z_n)y\bigr)^q
=
\pi(z_{n+1}).
\]
This proves the second structural property.

We now prove~\eqref{eq:finite-time-trace} by induction on \(n\).

For \(n=0\), one has \(R_0=\Sigma^\ast\) and
\(L^{(0)}=H\subseteq\Sigma^\ast\). Therefore
\(
L^{(0)}\cap R_0=H=\alpha_0H\beta_0.
\)

Assume that~\eqref{eq:finite-time-trace} holds for some \(n\ge0\).
Since
\(
R_{n+1}=xR_ny(xz_ny)^{q-1},
\)
every word of \(R_{n+1}\) contains the new symbol \(x\), whereas every word of the seed \(S\) belongs to \(\Sigma^\ast\) and therefore contains neither \(x\) nor \(y\). Consequently,
\(
R_{n+1}\cap S=\varnothing.
\)
Hence every word in \(L^{(n+1)}\cap R_{n+1}\) must belong to the recursive part \((xL^{(n)}y)^q\).

Every word in \(R_{n+1}\) has the form
\(
xry(xz_ny)^{q-1},\)
\(r\in R_n.\)
By the second structural property,
\(
\pi(r)=\pi(z_n).
\)
Since \(\pi(z_n)\) is a Dyck word, the projection of \(xry\) is primitive: its balance is strictly positive on every proper nonempty prefix and returns to zero only at the final \(y\). Therefore \(xry\) occupies exactly one top-level block in the unique decomposition described above. When \(q>1\), it is followed by exactly \(q-1\) top-level blocks equal to \(xz_ny\).

Suppose that the displayed word also belongs to \((xL^{(n)}y)^q\). By uniqueness of the top-level decomposition, its first interior word must satisfy \(r\in L^{(n)}\), while the remaining \(q-1\) interior words are \(z_n\in L^{(n)}\). Since also \(r\in R_n\), we obtain
\(
L^{(n+1)}\cap R_{n+1}
\subseteq
x\bigl(L^{(n)}\cap R_n\bigr)y(xz_ny)^{q-1}.
\)

Conversely, let \(r\in L^{(n)}\cap R_n\). Since \(z_n\in L^{(n)}\),
\(
xry(xz_ny)^{q-1}\in(xL^{(n)}y)^q.
\)
The same word belongs to
\(
R_{n+1}=xR_ny(xz_ny)^{q-1},
\)
and hence
\(
xry(xz_ny)^{q-1}\in L^{(n+1)}\cap R_{n+1}.
\)
Therefore
\(
L^{(n+1)}\cap R_{n+1}
=
x\bigl(L^{(n)}\cap R_n\bigr)y(xz_ny)^{q-1}.
\)

Applying the induction hypothesis gives
\[
L^{(n+1)}\cap R_{n+1}=x\alpha_nH\beta_n y(xz_ny)^{q-1}=\alpha_{n+1}H\beta_{n+1}.
\]
Thus~\eqref{eq:finite-time-trace} holds for every \(n\ge0\).
\end{proof}

The preceding trace identity allows us to go one step further: the initial language can be reconstructed exactly from every finite Picard iterate. Thus no information about the initial state is lost at any finite stage of the dynamics.

To compare initial languages containing a common distinguished word,
for each \(z\in\Sigma^\ast\) we set
\(
\mathfrak{H}_z=\{H\subseteq\Sigma^\ast:z\in H\}.
\)
For \(H\in\mathfrak{H}_z\), we write
\(L_H^{(0)}=H\) and
\(L_H^{(n+1)}=T_{qm}(L_H^{(n)})\)
to indicate explicitly the dependence of the Picard orbit on \(H\).
For every \(n\ge0\), define
\(
F_n:\mathfrak{H}_z\to\mathcal{P}(\Gamma^\ast)
\)
by \(F_n(H)=L_H^{(n)}\).
With \(z\) fixed, the auxiliary words \(z_n,\alpha_n,\beta_n\)
and the regular language \(R_n\) are the same for all
\(H\in\mathfrak{H}_z\).

\begin{theorem}
\label{thm:finite-time-recoverability}
Let \(z\in\Sigma^\ast\). For every \(H\in\mathfrak{H}_z\) and
\(n\ge0\), the initial language is recovered from its Picard
iterate by
\begin{equation}
H=
\left(
\alpha_n^{-1}
\bigl(L^{(n)}\cap R_n\bigr)
\right)
\beta_n^{-1}.
\label{eq:exact-recovery}
\end{equation}
Consequently, the map
\(F_n:\mathfrak{H}_z\to\mathcal{P}(\Gamma^\ast)\)
is injective for every \(n\ge0\).
\end{theorem}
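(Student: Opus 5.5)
The plan is to derive the recovery formula~\eqref{eq:exact-recovery} directly from the trace identity of Theorem~\ref{thm:finite-time-trace}, and then obtain injectivity as a formal consequence. Fix \(z\in\Sigma^\ast\), \(H\in\mathfrak{H}_z\) and \(n\ge0\). Since \(z\in H\), we have \(H\ne\varnothing\), so Theorem~\ref{thm:finite-time-trace} applies with this same \(z\). It gives
\[
L^{(n)}\cap R_n=\alpha_nH\beta_n .
\]
It therefore suffices to verify the purely word-level identity
\[
\bigl(\alpha_n^{-1}(\alpha_nH\beta_n)\bigr)\beta_n^{-1}=H
\]
for fixed words \(\alpha_n,\beta_n\in\Gamma^\ast\).

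This identity follows from cancellativity of the free monoid \(\Gamma^\ast\). First compute \(\alpha_n^{-1}(\alpha_nH\beta_n)=\{w:\alpha_nw\in\alpha_nH\beta_n\}\). If \(\alpha_nw=\alpha_nh\beta_n\) with \(h\in H\), then left cancellation gives \(w=h\beta_n\). Conversely, every word \(h\beta_n\) with \(h\in H\) clearly lies in this quotient. Hence the quotient equals \(H\beta_n\). In the same way, right cancellation gives \((H\beta_n)\beta_n^{-1}=H\). This proves~\eqref{eq:exact-recovery}.

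Injectivity of \(F_n\) follows because the right-hand side of~\eqref{eq:exact-recovery} depends only on \(L_H^{(n)}=F_n(H)\) and on the data \(\alpha_n,\beta_n,R_n\). These data are determined by \(z\) and \(q\) alone and are the same for all \(H\in\mathfrak{H}_z\). Define \(\Phi_n(L)=\bigl(\alpha_n^{-1}(L\cap R_n)\bigr)\beta_n^{-1}\). Then \(\Phi_n\circ F_n=\mathrm{id}_{\mathfrak{H}_z}\), so \(F_n(H)=F_n(H')\) implies \(H=H'\).

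The only step requiring care is the observation that \(\alpha_n,\beta_n,R_n\) do not depend on \(H\); this is what makes \(\Phi_n\) a single left inverse valid on all of \(\mathfrak{H}_z\). The conclusion would fail if the auxiliary words had been built from an element chosen separately for each \(H\), so it matters that the common word \(z\) is fixed in advance. All the substantive work, namely isolating \(\alpha_nH\beta_n\) from the rest of the iterate, is already done in Theorem~\ref{thm:finite-time-trace}, so I expect no real obstacle beyond this bookkeeping.
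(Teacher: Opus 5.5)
Your proof is correct and follows the same route as the paper: you apply the trace identity of Theorem~\ref{thm:finite-time-trace} with the fixed word $z\in H$, strip $\alpha_n$ and $\beta_n$ by cancellation in the free monoid, and deduce injectivity because $\alpha_n,\beta_n,R_n$ depend only on $z$ and $q$, so the single map $\Phi_n$ is a left inverse of $F_n$ on all of $\mathfrak{H}_z$. One small caveat about your closing remark: only this particular left-inverse argument needs the common $z$, not injectivity itself, since unique factorization into marked blocks already makes $T_{qm}$ injective on languages whose words have Dyck projections, so $H\mapsto L_H^{(n)}$ is in fact injective on all of $\mathcal{P}(\Sigma^\ast)$.
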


\begin{proof}
Let \(H\subseteq\Sigma^\ast\) be nonempty and choose \(z\in H\).
By Theorem~\ref{thm:finite-time-trace},
\[
L^{(n)}\cap R_n
=
\alpha_nH\beta_n
\qquad
\text{for every }n\ge0.
\]
Taking the left quotient by the fixed word \(\alpha_n\) gives
\[
\alpha_n^{-1}
\bigl(L^{(n)}\cap R_n\bigr)
=
\alpha_n^{-1}
(\alpha_nH\beta_n)
=
H\beta_n.
\]
Indeed, cancellation in the free monoid implies that
\(
\alpha_n w\in\alpha_nH\beta_n
\)
if and only if
\(
w\in H\beta_n.
\)
Taking next the right quotient by the fixed word \(\beta_n\) yields
\[
\left(
\alpha_n^{-1}
\bigl(L^{(n)}\cap R_n\bigr)
\right)
\beta_n^{-1}
=
(H\beta_n)\beta_n^{-1}
=
H.
\]
This proves~\eqref{eq:exact-recovery}.

We now prove the injectivity statement. Fix \(z\in\Sigma^\ast\) and
let \(H_1,H_2\in\mathfrak{H}_z\). Since the same distinguished word
\(z\) belongs to both languages, the sequences
\(z_n,\alpha_n,\beta_n\) and the regular language
\(R_n=\alpha_n\Sigma^\ast\beta_n\) are identical for the two Picard
orbits.

Suppose that, for some finite \(n\),
\(
F_n(H_1)=F_n(H_2).
\)
Applying the recovery formula~\eqref{eq:exact-recovery} to this common
finite iterate gives
%\[
%\begin{aligned}
%H_1
%&=
%\left(
%\alpha_n^{-1}
%\bigl(F_n(H_1)\cap R_n\bigr)
%\right)
%\beta_n^{-1}
%\\
%&=
%\left(
%\alpha_n^{-1}
%\bigl(F_n(H_2)\cap R_n\bigr)
%\right)
%\beta_n^{-1}
%=
%H_2.
%\end{aligned}
%\]

$$ % rewritten from above
H_1
=
\left(
\alpha_n^{-1}
\bigl(F_n(H_1)\cap R_n\bigr)
\right)
\beta_n^{-1}
=
\left(
\alpha_n^{-1}
\bigl(F_n(H_2)\cap R_n\bigr)
\right)
\beta_n^{-1}
=
H_2.
$$

Therefore \(F_n\) is injective for every finite \(n\).
\end{proof}

\begin{corollary}
\label{cor:injective-uniform-collapse}
Under the assumptions of Theorem~\ref{thm:finite-time-recoverability},
every finite-time map
\[
F_n:\mathfrak{H}_z\to\mathcal{P}(\Gamma^\ast)
\]
is injective, while
\begin{equation}
\sup_{H\in\mathfrak{H}_z}
d\bigl(F_n(H),L_q^{(\star)}\bigr)
\le
2^{-2qn}.
\label{eq:uniform-collapse}
\end{equation}
Consequently, the sequence of injective maps \(F_n\) converges uniformly on \(\mathfrak{H}_z\) to the constant map
\(
F_\infty(H)=L_q^{(\star)}.
\)
\end{corollary}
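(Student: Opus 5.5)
The plan is to take the two assertions separately. Injectivity of every finite-time map \(F_n\) is exactly the second conclusion of Theorem~\ref{thm:finite-time-recoverability}, so nothing new is needed there beyond citing it. The substance lies in the uniform estimate~\eqref{eq:uniform-collapse}, which I would deduce from the contraction theorem applied to the marked operator over the ambient alphabet \(\Gamma\).

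First I would identify the minimum guard length of \(T_{qm}\). As observed after Definition~\ref{def:marked-q-power}, \(T_{qm}\) is the instance of Definition~\ref{def:prelim-q-power} over \(\Gamma\) with \(p=1\) and the single guard \((u_1,v_1)=(x,y)\). Hence
\(
m=|x|+|y|=2\ge1,
\)
with lengths measured in \(\Gamma^\ast\). Theorem~\ref{thm:prelim-q-power-contraction} then applies on \((\mathcal{P}(\Gamma^\ast),d)\). It gives a unique fixed point \(L_q^{(\star)}\) and, for every initial language \(L^{(0)}\subseteq\Gamma^\ast\),
\[
d\bigl(L^{(n)},L_q^{(\star)}\bigr)\le 2^{-2qn}\,d\bigl(L^{(0)},L_q^{(\star)}\bigr).
\]
This holds in particular for \(L^{(0)}=H\in\mathfrak{H}_z\), since \(H\subseteq\Sigma^\ast\subset\Gamma^\ast\). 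No assumption on \(S\) beyond \(S\subseteq\Sigma^\ast\) is needed for this step; context-freeness of \(S\) only matters if one also wants to conclude, via Remark~\ref{rem:marked-operator-fixed-point}, that the limit is context-free.

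Second, I would bound the initial distance uniformly. By Definition~\ref{def:prelim-language-ultrametric}, \(\nu(L,M)\ge0\) for all languages, so \(d(L,M)=2^{-\nu(L,M)}\le1\). The extreme value \(1\) occurs exactly when \(L\) and \(M\) disagree on \(\varepsilon\). Consequently \(d(F_n(H),L_q^{(\star)})\le 2^{-2qn}\) for every \(H\in\mathfrak{H}_z\), and taking the supremum yields~\eqref{eq:uniform-collapse}. Since \(2^{-2qn}\to0\) independently of \(H\), the maps \(F_n\) converge uniformly on \(\mathfrak{H}_z\) to the constant map \(F_\infty(H)=L_q^{(\star)}\). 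The fixed point is the same for all \(H\) by the uniqueness part of Theorem~\ref{thm:prelim-q-power-contraction}.

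There is no real obstacle in this argument. The only points requiring care are bookkeeping ones:
\begin{enumerate}[label=\textup{(\roman*)}]
\item using \(m=2\) rather than reading the subscript \(m\) in \(T_{qm}\) as the guard length;
\item working throughout in the metric over \(\Gamma\);
\item noting that the crude bound \(d\le1\) is what makes the estimate uniform over the possibly uncountable family \(\mathfrak{H}_z\).
\end{enumerate}
I would close by remarking on the contrast this exhibits: each \(F_n\) is injective, yet the family collapses uniformly onto a single point.
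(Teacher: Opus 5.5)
Your proposal is correct and follows essentially the same route as the paper. Injectivity is quoted from Theorem~\ref{thm:finite-time-recoverability}, and the uniform bound \eqref{eq:uniform-collapse} comes from the contraction estimate of Theorem~\ref{thm:prelim-q-power-contraction} with minimum guard length \(m=2\), combined with the bound \(d\le 1\), which makes the estimate independent of \(H\).
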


\begin{proof}
Injectivity follows from Theorem~\ref{thm:finite-time-recoverability}. In the present construction the single guard is \((x,y)\), so the minimum guard length is \(m=2\). The contraction estimate from Theorem~\ref{thm:prelim-q-power-contraction} therefore gives
\(
d\bigl(F_n(H),L_q^{(\star)}\bigr)
\le
2^{-2qn}d(H,L_q^{(\star)})
\le
2^{-2qn},
\)
because the language ultrametric takes values in \([0,1]\). The bound is independent of \(H\), which proves~\eqref{eq:uniform-collapse} and hence uniform convergence to the constant map \(H\mapsto L_q^{(\star)}\).
\end{proof}

\begin{remark}
\label{rem:finite-information-limit-collapse}
Corollary~\ref{cor:injective-uniform-collapse} separates finite-time information from asymptotic behaviour. 
At every finite stage the Picard dynamics still distinguishes all initial languages in \(\mathfrak{H}_z\), 
since \(F_n\) is injective and the initial language can be reconstructed exactly. Nevertheless, the same maps converge uniformly to a constant map. 
Thus no finite stage loses the encoded initial information, although this information disappears completely from the 
limit % limiting 
state.
\end{remark}

We now turn to the preservation of language classes under the
marked Picard iteration. For a class
\(\mathscr{C}\subseteq\mathcal{L}\), we consider the following
recovery closure properties:
\begin{enumerate}
\item[(R1)]
If \(L\in\mathscr{C}\) and \(R\subseteq\Gamma^\ast\) is regular,
then \(L\cap R\in\mathscr{C}\).
\item[(R2)]
If \(L\in\mathscr{C}\) and \(u,v\in\Gamma^\ast\) are fixed words,
then \(u^{-1}L\in\mathscr{C}\) and \(Lv^{-1}\in\mathscr{C}\).
\end{enumerate}

\begin{theorem}\label{thm:finite-time-class-preservation}
Let \(H\ne\varnothing\), and let
\(\mathscr{C}\subseteq\mathcal{L}\) satisfy
\textup{(R1)}--\textup{(R2)}. 

Then, for every \(n\ge0\),
\begin{equation}
L^{(n)}\in\mathscr{C}
\quad\Longrightarrow\quad
H\in\mathscr{C}
\label{eq:backward-class-transfer-general}
\end{equation}
and % Equivalently, for every \(n\ge0\),
\begin{equation}
H\notin\mathscr{C}
\quad\Longrightarrow\quad
L^{(n)}\notin\mathscr{C}.
\label{eq:class-nonmembership-persistence}
\end{equation}
If, in addition,
\begin{equation}
T_{qm}(\mathscr{C})\subseteq\mathscr{C},
\label{eq:forward-class-invariance}
\end{equation}
then, for every \(n\ge0\),
\begin{equation}
H\in\mathscr{C}
\quad\Longleftrightarrow\quad
L^{(n)}\in\mathscr{C}.
\label{eq:finite-time-class-equivalence}
\end{equation}
\end{theorem}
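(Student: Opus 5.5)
The plan is to read everything off the recovery formula~\eqref{eq:exact-recovery} of Theorem~\ref{thm:finite-time-recoverability}. Since \(H\ne\varnothing\), we fix \(z\in H\) and form the words \(\alpha_n,\beta_n\) and the regular language \(R_n=\alpha_n\Sigma^\ast\beta_n\). Theorem~\ref{thm:finite-time-trace} shows that \(R_n\) is regular, and Theorem~\ref{thm:finite-time-recoverability} gives
\[
H=\left(\alpha_n^{-1}\bigl(L^{(n)}\cap R_n\bigr)\right)\beta_n^{-1}
\]
for every \(n\ge0\).

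To prove~\eqref{eq:backward-class-transfer-general}, assume \(L^{(n)}\in\mathscr{C}\). By (R1), applied with the regular language \(R_n\), we get \(L^{(n)}\cap R_n\in\mathscr{C}\). By the left-quotient part of (R2), applied with the fixed word \(u=\alpha_n\in\Gamma^\ast\), we get \(\alpha_n^{-1}(L^{(n)}\cap R_n)\in\mathscr{C}\). By the right-quotient part of (R2), applied with \(v=\beta_n\), the language on the right-hand side of the recovery formula lies in \(\mathscr{C}\). That language is exactly \(H\), so \(H\in\mathscr{C}\). The nonmembership statement~\eqref{eq:class-nonmembership-persistence} is simply the contrapositive of~\eqref{eq:backward-class-transfer-general}, so it needs no separate argument.

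For the equivalence~\eqref{eq:finite-time-class-equivalence}, the backward direction is~\eqref{eq:backward-class-transfer-general}. The forward direction is an induction on \(n\) using~\eqref{eq:forward-class-invariance}. The base case is \(L^{(0)}=H\in\mathscr{C}\), and if \(L^{(n)}\in\mathscr{C}\), then \(L^{(n+1)}=T_{qm}(L^{(n)})\in T_{qm}(\mathscr{C})\subseteq\mathscr{C}\).

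I do not expect a genuine obstacle, because all the structural work (the Dyck-word decomposition and the trace identity) was already done in Theorem~\ref{thm:finite-time-trace}. The only points to check are bookkeeping. First, \(R_n\) and the quotient words depend only on the chosen \(z\), not on \(\mathscr{C}\); this is why the hypothesis \(H\ne\varnothing\) is needed. Second, \(R_n\) is a regular language over \(\Gamma\) and \(\alpha_n,\beta_n\in\Gamma^\ast\), so (R1) and (R2), which are formulated over \(\Gamma\), apply as stated.
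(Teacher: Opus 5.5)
Your proposal is correct and follows the paper's proof essentially step for step. You derive the backward transfer from the recovery formula of Theorem~\ref{thm:finite-time-recoverability} using (R1) and then (R2) twice, with quotients by $\alpha_n$ and $\beta_n$; nonmembership persistence is its contrapositive, and the equivalence follows by induction using $T_{qm}(\mathscr{C})\subseteq\mathscr{C}$, with $H\ne\varnothing$ serving only to fix $z\in H$ so that $\alpha_n,\beta_n,R_n$ are defined.
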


\begin{proof}
We first prove the backward implication
\eqref{eq:backward-class-transfer-general}. Let \(n\ge0\) be fixed
and suppose that
\(
L^{(n)}\in\mathscr{C}.
\)

By Theorem~\ref{thm:finite-time-recoverability}, there exist fixed
words \(\alpha_n,\beta_n\in\Gamma^\ast\) and a regular language
\(
R_n=\alpha_n\Sigma^\ast\beta_n
\)
such that
\begin{equation}
H=
\left(
\alpha_n^{-1}
\bigl(L^{(n)}\cap R_n\bigr)
\right)
\beta_n^{-1}.
\label{eq:abstract-class-recovery}
\end{equation}

Since \(R_n\) is regular and
\(L^{(n)}\in\mathscr{C}\), property~(R1) gives
\(
L^{(n)}\cap R_n\in\mathscr{C}.
\)
Applying the left quotient by the fixed word \(\alpha_n\) and using
property~(R2), we obtain
\(
\alpha_n^{-1}
\bigl(L^{(n)}\cap R_n\bigr)
\in\mathscr{C}.
\)
A second application of~(R2), now to the right quotient by the fixed
word \(\beta_n\), gives
\(
\left(
\alpha_n^{-1}
\bigl(L^{(n)}\cap R_n\bigr)
\right)
\beta_n^{-1}
\in\mathscr{C}.
\)
By~\eqref{eq:abstract-class-recovery}, the language on the left-hand
side is exactly \(H\). Hence
\(
H\in\mathscr{C}.
\)
This proves~\eqref{eq:backward-class-transfer-general}.

Taking the contrapositive immediately yields
\(
H\notin\mathscr{C}
\Longrightarrow
L^{(n)}\notin\mathscr{C}
\)
for every finite \(n\ge0\), which proves
\eqref{eq:class-nonmembership-persistence}.

Assume now, in addition, that
\(
T_{qm}(\mathscr{C})\subseteq\mathscr{C}.
\)
Suppose first that \(H\in\mathscr{C}\). Since
\(
L^{(0)}=H,
\)
we have \(L^{(0)}\in\mathscr{C}\). If
\(L^{(k)}\in\mathscr{C}\) for some \(k\ge0\), then the invariance
assumption gives
\(
L^{(k+1)}
=
T_{qm}(L^{(k)})
\in
\mathscr{C}.
\)
Therefore, by induction,
\(
H\in\mathscr{C}
\Longrightarrow
L^{(n)}\in\mathscr{C}
\)
for every \(n\ge0.\)

The converse implication has already been proved in
\eqref{eq:backward-class-transfer-general}. Hence, for every finite
\(n\ge0\),
\(
H\in\mathscr{C}
\Longleftrightarrow
L^{(n)}\in\mathscr{C}.
\)
This proves~\eqref{eq:finite-time-class-equivalence}.
\end{proof}

\begin {remark}% NEW - simply text made as remark
Theorem~\ref{thm:finite-time-class-preservation} separates two
independent mechanisms. The recovery closure properties prevent a
finite Picard iterate from entering a class that does not contain
the initial language, while forward invariance prevents an orbit
starting inside the class from leaving it. When both mechanisms are
available, class membership becomes an exact finite-time invariant
of the dynamics.
\end{remark}

The abstract preservation principle established above can now be
applied to the successive classes of the Chomsky hierarchy. This
allows us to determine not merely whether a finite Picard iterate
belongs to a given language class, but whether it remains in the same
exact layer of the hierarchy as the initial language.

For the classification below, we use the Chomsky classes from
Definition~\ref{def:prelim-chomsky-hierarchy} 
for languages %NEW
over \(\Gamma\) and set
\[
\begin{aligned}
\operatorname{Type}(-1)&=\mathcal{L},
&
\operatorname{Type}(0)&=\mathcal{L}_{\mathrm{un}},\\
\operatorname{Type}(1)&=\mathcal{L}_{\mathrm{cs}},
&
\operatorname{Type}(2)&=\mathcal{L}_{\mathrm{cf}},\\
\operatorname{Type}(3)&=\mathcal{L}_{\mathrm{reg}},
&
\operatorname{Type}(4)&=\varnothing.
\end{aligned}
\]
The exact layers of the extended hierarchy are
\(
\mathcal{D}_i=
\operatorname{Type}(i)\setminus\operatorname{Type}(i+1),
\)
where \(i\in\{-1,0,1,2,3\}\).

\begin{theorem}\label{thm:exact-chomsky-layer-preservation}
For the marked Picard orbit 
of $T_{qm}$ % defined above,  %??????????????????????????????????? dali e dobre ?
the following assertions
hold.
\begin{enumerate}
\item[(i)]
If \(S\in\mathcal{L}_{\mathrm{cf}}\), then for every
\(i\in\{-1,0,1,2\}\) and \(n\ge0\),
\begin{equation}
L^{(0)} % H
\in\mathcal{D}_i
\quad\Longrightarrow\quad
L^{(n)}\in\mathcal{D}_i.
\label{eq:cfl-seed-layer-preservation}
\end{equation}

\item[(ii)]
If \(S\in\mathcal{L}_{\mathrm{reg}}\), then for every
\(i\in\{-1,0,1,2,3\}\) and \(n\ge0\),
\begin{equation}
L^{(0)} % H
\in\mathcal{D}_i
\quad\Longrightarrow\quad
L^{(n)}\in\mathcal{D}_i.
\label{eq:regular-seed-layer-preservation}
\end{equation}
\end{enumerate}
In both cases,
\(
L^{(n)}\to L_q^{(\star)}\in\mathcal{L}_{\mathrm{cf}},
\)
and the limit depends only on \(T_{qm}\), not on the initial
language \(
L^{(0)} % H
\).
\end{theorem}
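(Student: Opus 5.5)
The plan is to reduce the layer statement to two applications of Theorem~\ref{thm:finite-time-class-preservation}: one to the class \(\operatorname{Type}(i)\) and one to \(\operatorname{Type}(i+1)\). Since \(L^{(0)}\in\mathcal{D}_i\) means \(L^{(0)}\in\operatorname{Type}(i)\) and \(L^{(0)}\notin\operatorname{Type}(i+1)\), it suffices to show two things. First, \(L^{(n)}\in\operatorname{Type}(i)\) follows from forward invariance \(T_{qm}(\operatorname{Type}(i))\subseteq\operatorname{Type}(i)\) by induction. Second, \(L^{(n)}\notin\operatorname{Type}(i+1)\) follows from the nonmembership persistence \eqref{eq:class-nonmembership-persistence}, which needs only (R1)--(R2) for \(\operatorname{Type}(i+1)\).

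Before this, I will treat the case \(H=\varnothing\) separately, because Theorem~\ref{thm:finite-time-class-preservation} assumes \(H\neq\varnothing\). The empty language is regular, so it lies in \(\mathcal{D}_3\). For \(i=3\) in (ii) I need \(L^{(n)}\) regular and not in \(\operatorname{Type}(4)=\varnothing\). The second condition is vacuous, and the first follows from regularity of \(S\) and forward invariance. In (i), the empty \(H\) does not lie in \(\mathcal{D}_i\) for \(i\le2\), so that case does not arise.

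Next I check the closure hypotheses class by class. For \(\operatorname{Type}(j)\) with \(j\in\{0,1,2,3\}\), properties (R1) and (R2) are exactly Theorem~\ref{thm:prelim-chomsky-closure}, which covers \(\mathcal{L}_{\mathrm{reg}}\), \(\mathcal{L}_{\mathrm{cf}}\), \(\mathcal{L}_{\mathrm{cs}}\) and \(\mathcal{L}_{\mathrm{un}}\) over \(\Gamma\). The class \(\operatorname{Type}(4)=\varnothing\) satisfies (R1)--(R2) vacuously; it is only needed as the "upper" class for \(i=3\), where nonmembership is automatic anyway. The class \(\operatorname{Type}(-1)=\mathcal{L}\) is trivially invariant under \(T_{qm}\).

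Forward invariance \(T_{qm}(\operatorname{Type}(i))\subseteq\operatorname{Type}(i)\) is the step where the seed hypothesis enters. Write \(T_{qm}(L)=S\cup(xLy)^q\). For \(\mathcal{C}\) among \(\mathcal{L}_{\mathrm{un}},\mathcal{L}_{\mathrm{cs}},\mathcal{L}_{\mathrm{cf}}\), the class is closed under concatenation with fixed words, \(q\)-fold concatenation and finite union (Theorem~\ref{thm:prelim-chomsky-closure}). The class also contains \(S\) whenever \(S\) is context-free, because \(\mathcal{L}_{\mathrm{cf}}\) is contained in each of these classes. This gives (i) for \(i\in\{0,1,2\}\), and the case \(i=-1\) is trivial. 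For (ii) the same argument works with \(\mathcal{L}_{\mathrm{reg}}\) as well, since \(S\) is now regular. Note that for \(i=2\) in (i) we need \(S\in\mathcal{L}_{\mathrm{cf}}\), and for \(i=3\) we need \(S\in\mathcal{L}_{\mathrm{reg}}\). This is precisely why (i) stops at \(i=2\): with a non-regular context-free seed, \(L^{(1)}\supseteq S\) can fail to be regular. I also want the remark that membership of \(\Sigma\)-languages in each class is unchanged when viewed over \(\Gamma\), which the paper states as a convention.

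The final convergence claim follows directly from Remark~\ref{rem:marked-operator-fixed-point} together with Theorem~\ref{thm:cfl-fixed-point}, because a regular seed is in particular context-free. The main obstacle I anticipate is bookkeeping rather than depth. The points needing care are the boundary layers: \(\operatorname{Type}(-1)\), where \(\mathcal{L}_{\mathrm{nrn}}\) persistence comes only from (R1)--(R2) for \(\mathcal{L}_{\mathrm{un}}\), and \(\operatorname{Type}(4)\). The other point is the empty initial language, which must be handled outside Theorem~\ref{thm:finite-time-class-preservation}.
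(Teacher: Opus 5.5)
Your proposal is correct and follows essentially the same route as the paper. Forward invariance of \(\operatorname{Type}(i)\) (using \(S\in\operatorname{Type}(i)\)) gives \(L^{(n)}\in\operatorname{Type}(i)\), the recovery-based nonmembership persistence of Theorem~\ref{thm:finite-time-class-preservation} applied to \(\operatorname{Type}(i+1)\) gives \(L^{(n)}\notin\operatorname{Type}(i+1)\), and the cases \(i=3\) and \(H=\varnothing\) are settled directly by invariance of the regular class, exactly as in the paper's proof.
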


\begin{proof}
We apply Theorem~\ref{thm:finite-time-class-preservation} to the
successive classes of the extended Chomsky hierarchy.

By Theorem~\ref{thm:prelim-chomsky-closure}, each of the classes
\(
\mathcal{L}_{\mathrm{reg}}, \ 
\mathcal{L}_{\mathrm{cf}}, \ 
\mathcal{L}_{\mathrm{cs}}, \ 
\mathcal{L}_{\mathrm{un}}
\)
satisfies the recovery assumptions~(R1)--(R2) of
Theorem~\ref{thm:finite-time-class-preservation}. The same is
trivially true for
\(
\operatorname{Type}(-1)=\mathcal{P}(\Gamma^\ast).
\)

Furthermore, Theorem~\ref{thm:prelim-chomsky-closure} implies that
whenever
\(
S\in\operatorname{Type}(i),
\)
the class \(\operatorname{Type}(i)\) is invariant under \(T_{qm}\).
Indeed, if \(L\in\operatorname{Type}(i)\), then
\(
(xLy)^q\in\operatorname{Type}(i),
\)
and hence
\(
T_{qm}(L)=S\cup(xLy)^q\in\operatorname{Type}(i).
\)

We first prove part~(i). Assume that
\(
S\in\mathcal{L}_{\mathrm{cf}}
=
\operatorname{Type}(2).
\)

Consider first \(i=-1\). If
\(
H=L^{(0)}\in\mathcal{D}_{-1},
\)
then
\(
H\notin\mathcal{L}_{\mathrm{un}}.
\)
Since \(\mathcal{L}_{\mathrm{un}}\) satisfies the recovery assumptions of
Theorem~\ref{thm:finite-time-class-preservation}, its contrapositive
part gives
\(
L^{(n)}\notin\mathcal{L}_{\mathrm{un}}
\)
for every finite \(n\ge0\). Since every
\(
L^{(n)}\in\operatorname{Type}(-1)=\mathcal{L},
\)
we obtain
\(
L^{(n)}\in\mathcal{D}_{-1}
\)
for every finite \(n\ge0\).

Now let
\(
i\in\{0,1,2\}
\)
and suppose that
\(
H=L^{(0)}\in\mathcal{D}_i.
\)
Then
\(
H\in\operatorname{Type}(i)
\)
and
\(
H\notin\operatorname{Type}(i+1).
\)
Since
\(
S\in\operatorname{Type}(2),
\)
the inclusions of the Chomsky hierarchy imply
\(
S\in\operatorname{Type}(i).
\)
Hence \(\operatorname{Type}(i)\) is invariant under \(T_{qm}\), and
Theorem~\ref{thm:finite-time-class-preservation} gives
\(
L^{(n)}\in\operatorname{Type}(i)
\)
for every finite \(n\ge0\).

On the other hand, \(\operatorname{Type}(i+1)\) satisfies the recovery
assumptions of the same theorem. Since
\(
H\notin\operatorname{Type}(i+1),
\)
the contrapositive implication gives
\(
L^{(n)}\notin\operatorname{Type}(i+1)
\)
for every finite \(n\ge0\). Therefore
\(
L^{(n)}
\in
\operatorname{Type}(i)
\setminus
\operatorname{Type}(i+1)
=
\mathcal{D}_i
\)
for every finite \(n\ge0\). This proves part~(i).

We next prove part~(ii). Assume that
\(
S\in\mathcal{L}_{\mathrm{reg}}
=
\operatorname{Type}(3).
\)
Then
\(
S\in\operatorname{Type}(i)
\)
for every
\(
i\in\{-1,0,1,2,3\}.
\)

For
\(
i\in\{-1,0,1,2\},
\)
the argument from part~(i) applies without change and gives
\(
H\in\mathcal{D}_i
\Longrightarrow
L^{(n)}\in\mathcal{D}_i
\)
for every finite \(n\ge0.\)

It remains to consider \(i=3\). Since
\(
\mathcal{D}_3
=
\operatorname{Type}(3)\setminus\operatorname{Type}(4)
=
\mathcal{L}_{\mathrm{reg}},
\)
let
\(
H=L^{(0)}\in\mathcal{L}_{\mathrm{reg}}.
\)
Because \(S\) is regular and the regular languages are closed under
concatenation and finite union, the class
\(\mathcal{L}_{\mathrm{reg}}\) is invariant under \(T_{qm}\). Therefore
\(
L^{(n)}\in\mathcal{L}_{\mathrm{reg}}
=
\mathcal{D}_3
\)
for every \(n\ge0.\)
This argument also covers the case \(H=\varnothing\).

Finally, in part~(i) the seed is context-free by assumption, while in
part~(ii) it is regular and hence context-free. Consequently,
Theorem~\ref{thm:cfl-fixed-point} gives
\(
L_q^{(\star)}\in\mathcal{L}_{\mathrm{cf}}.
\)
By the contraction theorem,
\(
L^{(n)}\longrightarrow L_q^{(\star)}
\)
for every initial language \(L^{(0)}\). Since the fixed point is
unique, it depends only on the operator \(T_{qm}\) and not on the
initial language.

Thus every exact Chomsky layer specified in parts~(i) and~(ii) is
preserved at every finite Picard stage, whereas all corresponding
orbits converge to the same context-free fixed point.
\end{proof}

\begin{corollary}
\label{cor:proper-cfl-seed-fixed-point}
Assume that
\(
S\in
\mathcal{L}_{\mathrm{cf}}
\setminus
\mathcal{L}_{\mathrm{reg}}.
\)
For the marked operator
\(
T_{qm}(L)=S\cup(xLy)^q,
\)
the unique fixed point satisfies
\(
L_q^{(\star)}
\in
\mathcal{L}_{\mathrm{cf}}
\setminus
\mathcal{L}_{\mathrm{reg}}.
\)
\end{corollary}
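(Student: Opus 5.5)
Context-freeness of \(L_q^{(\star)}\) is already supplied by Theorem~\ref{thm:cfl-fixed-point}, applied over the ambient alphabet \(\Gamma\) as in Remark~\ref{rem:marked-operator-fixed-point}. The only new content is therefore \(L_q^{(\star)}\notin\mathcal{L}_{\mathrm{reg}}\). I would argue by contradiction: recover \(S\) from \(L_q^{(\star)}\) by an operation that preserves regularity.

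The key observation is that the seed is exactly the marker-free part of the fixed point. Since \(L_q^{(\star)}=T_{qm}(L_q^{(\star)})=S\cup(xL_q^{(\star)}y)^q\), and every word of \((xL_q^{(\star)}y)^q\) contains the symbol \(x\) while \(S\subseteq\Sigma^\ast\), we get
\[
L_q^{(\star)}\cap\Sigma^\ast=S\cup\bigl((xL_q^{(\star)}y)^q\cap\Sigma^\ast\bigr)=S.
\]
Here \(\Sigma^\ast\) is a regular language over \(\Gamma\).

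Now suppose \(L_q^{(\star)}\) were regular. By Theorem~\ref{thm:prelim-chomsky-closure}, regular languages are closed under intersection with a regular language, so \(S=L_q^{(\star)}\cap\Sigma^\ast\) would be regular over \(\Gamma\). By the paper's convention that enlarging the alphabet from \(\Sigma\) to \(\Gamma\) does not change class membership, \(S\) would also be regular over \(\Sigma\). This contradicts \(S\notin\mathcal{L}_{\mathrm{reg}}\).

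There is no real obstacle. The only point that needs care is the alphabet identification: \(S\in\mathcal{L}_{\mathrm{cf}}\setminus\mathcal{L}_{\mathrm{reg}}\) must be read over \(\Gamma\), equivalently over \(\Sigma\). This is covered by the stated convention, or directly by observing that a DFA for \(S\) over \(\Gamma\) restricts to one over \(\Sigma\).
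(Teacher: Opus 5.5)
Your proof is correct and follows essentially the same route as the paper. You get context-freeness from Theorem~\ref{thm:cfl-fixed-point}, derive \(L_q^{(\star)}\cap\Sigma^\ast=S\) from the fixed-point equation and the markers, and then use closure of regular languages under intersection with \(\Sigma^\ast\) to reach a contradiction; your extra remark about reading regularity over \(\Gamma\) versus \(\Sigma\) is a harmless clarification of the paper's stated convention.
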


\begin{proof}
By Theorem~\ref{thm:cfl-fixed-point},
\(
L_q^{(\star)}\in\mathcal{L}_{\mathrm{cf}}.
\)
Since
\(
L_q^{(\star)}
=
S\cup(xL_q^{(\star)} y)^q,
\)
every word in the recursive part contains the
new %  fresh 
symbols \(x\) and
\(y\), whereas every word of \(S\) belongs to \(\Sigma^\ast\).
Consequently,
\(
L_q^{(\star)}\cap\Sigma^\ast=S.
\)
If \(L_q^{(\star)}\) were regular, then, since \(\Sigma^\ast\) is regular
and the regular languages are closed under intersection, the language
\(
S=L_q^{(\star)}\cap\Sigma^\ast
\)
would also be regular. This contradicts
\(
S\notin\mathcal{L}_{\mathrm{reg}}.
\)
Hence
\(
L_q^{(\star)}
\in
\mathcal{L}_{\mathrm{cf}}
\setminus
\mathcal{L}_{\mathrm{reg}}.
\)
\end{proof}

\begin{corollary}
\label{cor:proper-cfl-seed-regular-initial}
Assume that
\(
S\in
\mathcal{L}_{\mathrm{cf}}
\setminus
\mathcal{L}_{\mathrm{reg}}
\)
and that
\(
L^{(0)}=H\in\mathcal{L}_{\mathrm{reg}}.
\)
For the marked operator
\(
T_{qm}(L)=S\cup(xLy)^q,
\)
one has
\(
L^{(0)}\in\mathcal{L}_{\mathrm{reg}},
\)
while
\(
L^{(n)}
\in
\mathcal{L}_{\mathrm{cf}}
\setminus
\mathcal{L}_{\mathrm{reg}}\)
for every \(n\ge1.\)
\end{corollary}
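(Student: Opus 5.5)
The claim has two parts: \(L^{(0)}=H\) is regular by hypothesis, and for \(n\ge1\) we must show \(L^{(n)}\in\mathcal{L}_{\mathrm{cf}}\setminus\mathcal{L}_{\mathrm{reg}}\). Context-freeness is routine. Since \(H\) is regular it is context-free, and \(S\) is context-free, so Corollary~\ref{cor:cfl-picard-orbit} (via Corollary~\ref{cor:prelim-cfl-invariance} applied over \(\Gamma\) with the guard \((x,y)\)) gives \(L^{(n)}\in\mathcal{L}_{\mathrm{cf}}\) for every \(n\ge0\). Note that Theorem~\ref{thm:exact-chomsky-layer-preservation} cannot be used here: it would need \(S\in\mathcal{L}_{\mathrm{reg}}\) to keep the orbit regular, and the whole point is that the nonregular seed pushes the orbit out of the regular layer.

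Non-regularity will follow the separation argument of Corollary~\ref{cor:proper-cfl-seed-fixed-point}. For \(n\ge1\) we have \(L^{(n)}=T_{qm}(L^{(n-1)})=S\cup(xL^{(n-1)}y)^q\). Every word of \((xL^{(n-1)}y)^q\) contains the marker \(x\), because \(q\ge1\). On the other hand \(S\subseteq\Sigma^\ast\). Hence
\[
L^{(n)}\cap\Sigma^\ast=S .
\]
Suppose \(L^{(n)}\) were regular. Since \(\Sigma^\ast\) is regular and regular languages are closed under intersection with a regular language (Theorem~\ref{thm:prelim-chomsky-closure}), \(S\) would be regular. This contradicts \(S\notin\mathcal{L}_{\mathrm{reg}}\), so \(L^{(n)}\notin\mathcal{L}_{\mathrm{reg}}\) for all \(n\ge1\).

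There is no real obstacle. The only care needed is to use the marker-separation identity \(L^{(n)}\cap\Sigma^\ast=S\), which needs \(n\ge1\) so that \(L^{(n)}\) is an image under \(T_{qm}\), instead of the trace identity of Theorem~\ref{thm:finite-time-trace}. The trace identity would only recover \(H\), which is regular, and so gives no obstruction. The argument also covers \(H=\varnothing\), since it does not use Theorem~\ref{thm:finite-time-trace}; in that case \(L^{(1)}=S\) exactly, consistent with the claim.
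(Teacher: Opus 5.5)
Your proposal is correct and follows essentially the same route as the paper's proof. You obtain context-freeness of every iterate from closure of the context-free class under the operator. You obtain non-regularity for \(n\ge1\) from the marker-separation identity \(L^{(n)}\cap\Sigma^\ast=S\) together with closure of the regular languages under intersection with a regular language.
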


\begin{proof}
Since both \(S\) and \(H\) are context-free, closure under
concatenation and finite union gives
\(L^{(n)}\in\mathcal{L}_{\mathrm{cf}}\)
for every \(n\ge0.\)

For every \(n\ge1\), the recursive part of
\(
L^{(n)}
=
S\cup(xL^{(n-1)}y)^q
\)
contains the fresh marker symbols \(x\) and \(y\), whereas every word
of \(S\) belongs to \(\Sigma^\ast\). Consequently,
\(L^{(n)}\cap\Sigma^\ast=S\)
for every \(n\ge1.\)

Suppose that \(L^{(n)}\) were regular for some \(n\ge1\). Since
\(\Sigma^\ast\) is regular and the regular languages are closed under
intersection, it would follow that
\(
S=L^{(n)}\cap\Sigma^\ast
\)
is regular, contradicting
\(
S\notin\mathcal{L}_{\mathrm{reg}}.
\)
Hence every \(L^{(n)}\), \(n\ge1\), is context-free but nonregular.
\end{proof}

The conclusions of Theorem~\ref{thm:exact-chomsky-layer-preservation}
and Corollaries~\ref{cor:proper-cfl-seed-fixed-point}
and~\ref{cor:proper-cfl-seed-regular-initial}
are summarized in Tables~\ref{tab:chomsky-orbit-regular-seed}
and~\ref{tab:chomsky-orbit-cfl-seed}.
In each table, the first column gives the class of the initial
language, the second describes the finite Picard iterates, and the
third gives the class of the common fixed point.
The first table covers the case of a regular seed, while the second
covers the case of a genuinely context-free seed.

\begin{table}[ht]
\centering
\caption{Case I: Classification of the finite Picard iterates and of the fixed point when \(S\in\mathcal L_{\mathrm{reg}}\).}
\label{tab:chomsky-orbit-regular-seed}

\small
\renewcommand{\arraystretch}{1.35}
\setlength{\tabcolsep}{4pt}

\begin{tabularx}{\textwidth}{
>{\raggedright\arraybackslash}X
>{\centering\arraybackslash}X
>{\centering\arraybackslash}X
}
\toprule
\textbf{Class of \(L^{(0)}\)}
&
\textbf{Class of \(L^{(n)}\)}
&
\textbf{Class of \(L_q^{(\star)}\)}
\\
\midrule

\(\mathcal L_{\mathrm{reg}}\)
&
\(\mathcal L_{\mathrm{reg}}\)
&
\(\mathcal L_{\mathrm{cf}}\)
\\

\(\mathcal L_{\mathrm{cf}}\setminus\mathcal L_{\mathrm{reg}}\)
&
\(\mathcal L_{\mathrm{cf}}\setminus\mathcal L_{\mathrm{reg}}\)
&
\(\mathcal L_{\mathrm{cf}}\)
\\

\(\mathcal L_{\mathrm{cs}}\setminus\mathcal L_{\mathrm{cf}}\)
&
\(\mathcal L_{\mathrm{cs}}\setminus\mathcal L_{\mathrm{cf}}\)
&
\(\mathcal L_{\mathrm{cf}}\)
\\

\(\mathcal L_{\mathrm{un}}\setminus\mathcal L_{\mathrm{cs}}\)
&
\(\mathcal L_{\mathrm{un}}\setminus\mathcal L_{\mathrm{cs}}\)
&
\(\mathcal L_{\mathrm{cf}}\)
\\

\(\mathcal L_{\mathrm{nrn}}=\mathcal L\setminus\mathcal L_{\mathrm{un}}\)
&
\(\mathcal L_{\mathrm{nrn}}\)
&
\(\mathcal L_{\mathrm{cf}}\)
\\

\bottomrule
\end{tabularx}
\end{table}

\begin{table}[ht]
\centering
\caption{Case II: Classification of the finite Picard iterates and of
the fixed point when
\(S\in\mathcal L_{\mathrm{cf}}\setminus\mathcal L_{\mathrm{reg}}\).}
\label{tab:chomsky-orbit-cfl-seed}

\small
\renewcommand{\arraystretch}{1.35}
\setlength{\tabcolsep}{4pt}

\begin{tabularx}{\textwidth}{
>{\raggedright\arraybackslash}X
>{\centering\arraybackslash}X
>{\centering\arraybackslash}X
}
\toprule
\textbf{Class of \(L^{(0)}\)}
&
\textbf{Class of \(L^{(n)}\)}\par
\(n\ge1\)
&
\textbf{Class of \(L_q^{(\star)}\)}
\\
\midrule

\(\mathcal L_{\mathrm{reg}}\)
&
\(\mathcal L_{\mathrm{cf}}\setminus\mathcal L_{\mathrm{reg}}\)
&
\(\mathcal L_{\mathrm{cf}}\setminus\mathcal L_{\mathrm{reg}}\)
\\

\(\mathcal L_{\mathrm{cf}}\setminus\mathcal L_{\mathrm{reg}}\)
&
\(\mathcal L_{\mathrm{cf}}\setminus\mathcal L_{\mathrm{reg}}\)
&
\(\mathcal L_{\mathrm{cf}}\setminus\mathcal L_{\mathrm{reg}}\)
\\

\(\mathcal L_{\mathrm{cs}}\setminus\mathcal L_{\mathrm{cf}}\)
&
\(\mathcal L_{\mathrm{cs}}\setminus\mathcal L_{\mathrm{cf}}\)
&
\(\mathcal L_{\mathrm{cf}}\setminus\mathcal L_{\mathrm{reg}}\)
\\

\(\mathcal L_{\mathrm{un}}\setminus\mathcal L_{\mathrm{cs}}\)
&
\(\mathcal L_{\mathrm{un}}\setminus\mathcal L_{\mathrm{cs}}\)
&
\(\mathcal L_{\mathrm{cf}}\setminus\mathcal L_{\mathrm{reg}}\)
\\

\(\mathcal L_{\mathrm{nrn}}=\mathcal L\setminus\mathcal L_{\mathrm{un}}\)
&
\(\mathcal L_{\mathrm{nrn}}\)
&
\(\mathcal L_{\mathrm{cf}}\setminus\mathcal L_{\mathrm{reg}}\)
\\

\bottomrule
\end{tabularx}
\end{table}

%===========================================================
% Examples
%===========================================================
\section{Examples}
\label{sec:examples}

In each of the following examples, we take \(q=2\) and denote the
corresponding marked operator \(T_{qm}\) from
Definition~\ref{def:marked-q-power} by \(T_{2m}\).

We conclude with four examples illustrating the main structural
phenomena established above. The first gives an explicit
context-free fixed point for the marked \(q\)-power dynamics.
The second illustrates the finite-time trace and exact recovery of
the initial language. The third shows preservation of every exact
layer of the extended Chomsky hierarchy when the seed is regular.
The fourth considers a genuinely context-free seed and a regular
initial language, showing that the orbit leaves the regular class
after the first Picard step and remains genuinely context-free at
every subsequent finite stage.

\begin{example}\label{ex:explicit-cfl-fixed-point}

Let \(\Sigma=\{c\}\), let \(x,y\notin\Sigma\) be two distinct new
symbols, and put \(\Gamma=\Sigma\cup\{x,y\}\). Take the regular seed
\(S=\{c\}\), let \(q=2\), and consider the operator
\(
T_{2m}:\mathcal{P}(\Gamma^\ast)\to\mathcal{P}(\Gamma^\ast)
\)
defined by
\(
T_{2m}(L)=\{c\}\cup(xLy)^2.
\)
The single guard is \((x,y)\), so its length is
\(
m=|x|+|y|=2.
\)
Hence Theorem~\ref{thm:prelim-q-power-contraction} gives
\(
d(T_{2m}(L),T_{2m}(M))\le2^{-4}d(L,M),
\)
and therefore \(T_{2m}\) has a unique fixed point \(L_2^{(\star)}\).

Since the seed is finite and hence context-free,
Theorem~\ref{thm:cfl-fixed-point} implies that
\(L_2^{(\star)}\in\mathcal{L}_{\mathrm{cf}}\). The fixed-point grammar can
be written particularly simply. Introduce two nonterminals \(X\) and
\(B\), take \(X\) as the start symbol, and use the productions
\[
X\to c\mid BB,
\qquad
B\to xXy.
\]
If \(K=L(X)\), then \(B\) generates \(xKy\), while \(X\to BB\)
generates \((xKy)^2\). Consequently,
\(
K=\{c\}\cup(xKy)^2=T_{2m}(K).
\)
By uniqueness of the fixed point, \(K=L_2^{(\star)}\).

For instance, starting from \(L^{(0)}=\varnothing\) gives
\(
L^{(1)}=\{c\}
\)
and
\(
L^{(2)}=\{c,xcyxcy\}.
\)
The finite iterates depend on the chosen initial language, whereas
the fixed point does not. Every Picard orbit of this operator
converges to the same context-free language \(L_2^{(\star)}\).

This example illustrates the first main phenomenon of the paper:
the language class of the limiting fixed point is determined by the
operator and its seed, independently of the language from which the
Picard iteration starts.
\end{example}

\begin{example}
\label{ex:finite-time-recovery}

We next illustrate Theorems~\ref{thm:finite-time-trace}
and~\ref{thm:finite-time-recoverability}. Let
\(
\Sigma=\{a,s\},
\)
let \(x,y\notin\Sigma\) be two distinct new symbols, and put
\(
\Gamma=\Sigma\cup\{x,y\}.
\)
Take the regular seed \(S=\{s\}\), let \(q=2\), and define
\(
T_{2m}(L)=\{s\}\cup(xLy)^2.
\)
Let
\(
H=L^{(0)}=\{a,aa\}\subseteq\Sigma^\ast
\)
and choose the distinguished word \(z=a\in H\).

At time \(n=0\), one has
\(
z_0=a,
\)
\(
\alpha_0=\beta_0=\varepsilon,
\)
and
\(
R_0=\Sigma^\ast.
\)
Hence
\(
L^{(0)}\cap R_0=H.
\)

At the first Picard step,
\(
xHy=\{xay,xaay\},
\)
and therefore
\[
L^{(1)}
=
\{s,\,
xayxay,\,
xayxaay,\,
xaayxay,\,
xaayxaay\}.
\]
The distinguished word evolves according to
\(
z_1=(xay)^2=xayxay.
\)
Moreover,
\(
\alpha_1=x
\)
and
\(
\beta_1=y(xay)=yxay.
\)
Thus
\(
R_1=x\Sigma^\ast yxay.
\)

Among the words of \(L^{(1)}\), exactly
\(xayxay\) and \(xaayxay\) belong to \(R_1\). Hence
\[
L^{(1)}\cap R_1
=
\{xayxay,xaayxay\}
=
x\{a,aa\}yxay
=
\alpha_1H\beta_1.
\]
The initial language is therefore recovered exactly by
\[
x^{-1}
\bigl(L^{(1)}\cap R_1\bigr)
(yxay)^{-1}
=
\{a,aa\}
=
H.
\]

At the second step,
\(
\alpha_2=xx
\)
and
\(
\beta_2=\beta_1y(xz_1y).
\)
Hence
\(
R_2=xx\Sigma^\ast\beta_2.
\)
The trace identity gives directly
\(
L^{(2)}\cap R_2
=
xxH\beta_2
=
\{xxa\beta_2,xxaa\beta_2\}.
\)
Applying the left quotient by \(xx\) and the right quotient by
\(\beta_2\) again recovers \(H\).

The same mechanism holds at every finite stage:
\(
L^{(n)}\cap R_n
=
\alpha_nH\beta_n,
\)
and consequently
\(
H=
\left(
\alpha_n^{-1}
\bigl(L^{(n)}\cap R_n\bigr)
\right)
\beta_n^{-1}.
\)
Thus the initial language remains exactly encoded in every finite
Picard iterate, even though the orbit converges to a fixed point that
is completely independent of \(H\).

The example also makes the injectivity result transparent. If
\(H_1,H_2\subseteq\Sigma^\ast\) contain the same distinguished word
\(z\) and satisfy
\(
T_{2m}^n(H_1)=T_{2m}^n(H_2)
\)
for some finite \(n\), then the same regular slice and the same
fixed-word quotients recover both initial languages. Hence
\(H_1=H_2\).
\end{example}

\begin{example}\label{ex:exact-chomsky-layer-preservation}
We now illustrate Theorem~\ref{thm:exact-chomsky-layer-preservation} simultaneously across the extended Chomsky hierarchy.

Let
\(
\Sigma=\{a,b,c,s\},
\)
let \(x,y\notin\Sigma\) be distinct new symbols, and put
\(
\Gamma=\Sigma\cup\{x,y\}.
\)
Take the regular seed \(S=\{s\}\), let \(q=2\), and consider
\(
T_{2m}(L)=\{s\}\cup(xLy)^2.
\)
Since the seed is regular, part~(ii) of
Theorem~\ref{thm:exact-chomsky-layer-preservation} applies to every
layer of the extended hierarchy.

Consider first the regular initial language
\(
H_3=a^\ast.
\)
If \(L_{H_3}^{(0)}=H_3\), then
\(
L_{H_3}^{(n)}\in\mathcal{L}_{\mathrm{reg}}\)
for every finite \(n\ge0.\)

Next, let
\(
H_2=\{a^k b^k:k\ge1\}.
\)
This language is context-free but not regular. Therefore
\(
L_{H_2}^{(n)}
\in
\mathcal{L}_{\mathrm{cf}}
\setminus
\mathcal{L}_{\mathrm{reg}}\)
for every finite \(n\ge0,\)
where \(L_{H_2}^{(0)}=H_2\).

Now take
\(
H_1=\{a^k b^k c^k:k\ge1\}.
\)
The language \(H_1\) is context-sensitive but not context-free.
Hence, if \(L_{H_1}^{(0)}=H_1\), then
\(
L_{H_1}^{(n)}
\in
\mathcal{L}_{\mathrm{cs}}
\setminus
\mathcal{L}_{\mathrm{cf}}\)
for every finite \(n\ge0.\)

By the strictness of the Chomsky hierarchy, choose a nonempty language
\(
H_0
\in
\mathcal{L}_{\mathrm{un}}
\setminus
\mathcal{L}_{\mathrm{cs}}.
\)
Then the corresponding orbit satisfies
\(
L_{H_0}^{(n)}
\in
\mathcal{L}_{\mathrm{un}}
\setminus
\mathcal{L}_{\mathrm{cs}}\)
for every finite \(n\ge0.\)

Finally, choose a language
\(
H_{-1}
\in
\mathcal{P}(\Sigma^\ast)
\setminus
\mathcal{L}_{\mathrm{un}}.
\)
For the Picard orbit starting from \(H_{-1}\),
Theorem~\ref{thm:exact-chomsky-layer-preservation} yields
\(
L_{H_{-1}}^{(n)}
\in
\mathcal{P}(\Gamma^\ast)
\setminus
\mathcal{L}_{\mathrm{un}}\)
for every finite \(n\ge0.\)

Thus, with a regular seed, every exact layer of the extended Chomsky
hierarchy is preserved at every finite Picard step. In particular,
the finite dynamics cannot move an initial language into a strictly
smaller Chomsky class.

The limiting behaviour is fundamentally different. Since the seed
\(S=\{s\}\) is context-free,
Theorem~\ref{thm:cfl-fixed-point} gives
\(
L_2^{(\star)}\in\mathcal{L}_{\mathrm{cf}}.
\)
All five Picard orbits above converge to this same fixed point.
Consequently, for example,
\(
L_{H_1}^{(n)}
\in
\mathcal{L}_{\mathrm{cs}}
\setminus
\mathcal{L}_{\mathrm{cf}}\)
for every finite \(n,\)
\(
L_{H_1}^{(n)}\longrightarrow
L_2^{(\star)}\in\mathcal{L}_{\mathrm{cf}}.
\)
Likewise, an orbit may remain outside the recursively enumerable
class at every finite stage and nevertheless converge to the same
context-free fixed point.

This is the finite-time versus limiting phenomenon captured by the
central theorem: the exact Chomsky layer of the initial language is
preserved throughout every finite stage, while the limiting language
may belong to a strictly smaller class.
\end{example}

\begin{example}\label{ex:cfl-seed-regular-initial}

The preceding example uses a regular seed and shows preservation of
every exact layer of the extended Chomsky hierarchy. We now illustrate
the different behaviour produced by a genuinely context-free seed when
the initial language is regular.

Let
\(
\Sigma=\{a,b\},
\)
let \(x,y\notin\Sigma\) be two distinct fresh symbols, and put
\(
\Gamma=\Sigma\cup\{x,y\}.
\)
Take
\(
S=\{a^k b^k:k\ge1\}
\in
\mathcal{L}_{\mathrm{cf}}
\setminus
\mathcal{L}_{\mathrm{reg}},
\)
let \(q=2\), and consider
\(
T_{2m}(L)=S\cup(xLy)^2.
\)
Choose the regular initial language
\(
L^{(0)}=\{\varepsilon\}.
\)

At the first Picard step,
\(
L^{(1)}
=
S\cup\{xyxy\},
\)
and therefore
\(
L^{(1)}
\in
\mathcal{L}_{\mathrm{cf}}
\setminus
\mathcal{L}_{\mathrm{reg}}.
\)

In fact, the same conclusion holds at every later finite step.
For each \(n\ge1\),
\(
L^{(n)}
=
S\cup(xL^{(n-1)}y)^2.
\)
Every word in the recursive part contains the marker symbols \(x\)
and \(y\), whereas every word of \(S\) belongs to \(\Sigma^\ast\).
Hence
\(
L^{(n)}\cap\Sigma^\ast=S\)
for every \(n\ge1.\)
If some \(L^{(n)}\), \(n\ge1\), were regular, then its intersection
with the regular language \(\Sigma^\ast\) would also be regular.
This would imply that \(S\) is regular, a contradiction. Consequently,
\(
L^{(n)}
\in
\mathcal{L}_{\mathrm{cf}}
\setminus
\mathcal{L}_{\mathrm{reg}}\)
for every \(n\ge1.\)

Thus a regular initial language leaves the regular class immediately
when the seed is genuinely context-free, and all subsequent finite
Picard iterates remain genuinely context-free. Moreover,
Corollary~\ref{cor:proper-cfl-seed-fixed-point} gives
\(
L_2^{(\star)}
\in
\mathcal{L}_{\mathrm{cf}}
\setminus
\mathcal{L}_{\mathrm{reg}},
\)
and the orbit converges to this same genuinely context-free fixed
point.

This example illustrates
Corollaries~\ref{cor:proper-cfl-seed-fixed-point}
and~\ref{cor:proper-cfl-seed-regular-initial} and completes the
classification summarized in
Tables~\ref{tab:chomsky-orbit-regular-seed}
and~\ref{tab:chomsky-orbit-cfl-seed}.
\end{example}

%===========================================================
% References
%===========================================================

\vskip 12pt

\begin{footnotesize}

\end{footnotesize}

\end{document}